\newcommand{\NP}{{\sf NP}}
\newcounter{ctrclaim}[theorem]
\newcounter{ctrcase}[theorem]
\newcommand{\problemdef}[3]{
        \begin{center}
                \begin{boxedminipage}{.99\textwidth}
                        \textsc{{#1}}\\[2pt]
                        \begin{tabular}{ r p{0.8\textwidth}}
                                \textit{~~~~Instance:} & {#2}\\
                                \textit{Question:} & {#3}
                        \end{tabular}
                \end{boxedminipage}
        \end{center}
}
\def\romannum{\begingroup
  \def\theenumi{\textup{(\roman{enumi})}}%
  \def\p@enumi{}%
  \def\labelenumi{\theenumi}%
  \enumerate}
\title{Complexity Framework for Forbidden Subgraphs~III: When Problems are Tractable on Subcubic Graphs}
\titlerunning{Complexity Framework For Forbidden Subgraphs}
\author{Matthew Johnson}{Durham University, Durham, United Kingdom}{matthew.johnson2@durham.ac.uk}{}{}
\author{Barnaby Martin}{Durham University, Durham, United Kingdom}{barnaby.d.martin@durham.ac.uk}{}{}
\author{Sukanya Pandey}{Utrecht University, Utrecht, The Netherlands}{s.pandey1@uu.nl}{}{}
\author{Dani\"el Paulusma}{Durham University, Durham United Kingdom}{daniel.paulusma@durham.ac.uk}{}{}
\author{Siani Smith}{University of Bristol and Heilbronn Institute for Mathematical Research, Bristol, United Kingdom}{siani.smith@bristol.ac.uk}{}{}
\author{Erik Jan van Leeuwen}{Utrecht University, Utrecht, The Netherlands}{e.j.vanleeuwen@uu.nl}{}{}
\authorrunning{M. Johnson et al.}
\keywords{forbidden subgraphh; independent feedback vertex set; treewidth}
\begin{document}
\maketitle

\begin{abstract}
For any finite set $\mathcal{H} = \{H_1,\ldots,H_p\}$ of graphs, a graph is $\mathcal{H}$-subgraph-free if it does not contain any of $H_1,\ldots,H_p$ as a subgraph. In recent work, meta-classifications have been studied: these show that if graph problems satisfy certain prescribed conditions, their complexity is determined  on classes of $\mathcal{H}$-subgraph-free graphs.  We continue this work and focus on problems that have polynomial-time solutions on classes that have bounded treewidth or maximum degree at most~$3$ and examine their complexity on $H$-subgraph-free graph classes where $H$ is a connected graph. With this approach, we obtain comprehensive classifications for {\sc (Independent) Feedback Vertex Set}, {\sc Connected Vertex Cover}, {\sc Colouring} and {\sc Matching Cut}.  This resolves a number of open problems.  

We highlight that, to establish that {\sc Independent Feedback Vertex Set} belongs to this collection of problems, we first show that it can be solved in polynomial time on graphs of maximum degree $3$.  We demonstrate that, with the exception of the complete graph on four vertices, each graph in this class has a minimum size feedback vertex set that is also an independent set.
\end{abstract}

\section{Introduction}\label{s-intro}

A graph $G$ contains a graph $H$ as a {\it subgraph} if $H$ can be obtained from $G$ by  vertex deletions and edge deletions; else $G$ is said to be {\it $H$-subgraph-free}.  If $H$ can be obtained from $G$ using \emph{only} vertex deletions, then $H$ is an \emph{induced} subgraph of $G$, and if not then $G$ is \emph{$H$-free}.
There are few studies of complexity classifications of graph problems for $H$-subgraph-free graphs (compare the greater attention given to problems on $H$-free graphs). There are results for {\sc Independent Set}, {\sc Dominating Set} and {\sc Longest Path}~\cite{AK90}, Max-Cut~\cite{Ka12} and {\sc List Colouring}~\cite{GP14} 
In these papers, complete classifications are presented giving the complexity of the problems even for ${\cal H}$-subgraph-free graphs, where ${\cal H}$ is any finite set of graphs
(for a 
 set of graphs ${\cal H}$, a graph $G$ is  {\it ${\cal H}$-subgraph-free} if $G$ is $H$-subgraph-free for every $H\in {\cal H}$).
Such classifications seem difficult to obtain.   For example, for {\sc Colouring}, there is only a partial classification\cite{GPR15}. For this reason -- and also noting that the classifications for the problems above were all the same -- a systematic approach was developed in~\cite{JMOPPSV} with the introduction of a new framework which we will describe after introducing some terminology. 

For an integer $k\geq 1$, the {\it $k$-subdivision} of an edge $e=uv$ of a graph replaces $e$ by a path of length $k+1$ with endpoints $u$ and $v$ (and $k$ new vertices). 
The {\it $k$-subdivision} of a graph~$G$ is the graph obtained from $G$ after $k$-subdividing each edge. 
For a graph class ${\cal G}$ and an integer~$k$, let ${\cal G}^k$ consist of the $k$-subdivisions of the graphs in ${\cal G}$.
Let $\Pi$ be a graph problem.
We say that $\Pi$ is \NP-complete
{\it under edge subdivision of subcubic graphs} if there exists an integer~$k\geq 1$ such that the following holds for 
the class of
subcubic graphs ${\cal G}$:
if $\Pi$ is \NP-complete for ${\cal G}$, then $\Pi$ is \NP-complete for ${\cal G}^{kp}$ for every integer $p\geq 1$. 
A graph problem~$\Pi$ is a {\it C123-problem} (belongs to the framework) if it satisfies the three conditions:\\[-10pt]
\begin{enumerate}
\item [{\bf C1.}] $\Pi$ is polynomial-time solvable for every graph class of bounded treewidth;
\item [{\bf C2.}] $\Pi$ is \NP-complete
 for the class of subcubic graphs; and
\item [{\bf C3.}] $\Pi$ is \NP-complete under edge subdivision 
of subcubic graphs.\\[-10pt]
\end{enumerate}
\noindent
 As shown in~\cite{JMOPPSV}, C123-problems allow for full complexity classifications for ${\cal H}$-subgraph-free graphs (as long as ${\cal H}$ has finite size).
 A {\it subdivided} claw is a graph obtained from a claw ($4$-vertex star) after subdividing each of its edges zero or more times. The {\it disjoint union} of two vertex-disjoint graphs $G_1$ and $G_2$ has vertex set $V(G_1)\cup V(G_2)$ and edge set  $E(G_1)\cup E(G_2)$. The set ${\cal S}$ consists of the graphs that are disjoint unions of subdivided claws and paths. 
Now, let  $\Pi$ be a C123-problem. For a finite set ${\cal H}$, the problem $\Pi$ on ${\cal H}$-subgraph-free graphs is efficiently solvable if ${\cal H}$ contains a graph from ${\cal S}$ and computationally hard otherwise.~\cite{JMOPPSV}.

Examples of C123-problems include {\sc Independent Set}, {\sc Dominating Set}, {\sc List Colouring}, {\sc Odd Cycle Transversal}, {\sc Max Cut} and {\sc Steiner Tree}; see~\cite{JMOPPSV} for a comprehensive list. 
Thus we see the power of the framework to aid progress in deciding the complexity of problems on ${\cal H}$-subgraph-free graphs.   But
there are still many graph problems that are not C123.
In~\cite{MPPSV22}, results were obtained for problems that satisfy C1 and C2 but not C3.
Such problems are called C12-problems and include {\sc $k$-Induced Disjoint Paths}, {\sc $C_5$-Colouring}, {\sc Hamilton Cycle} and {\sc Star $3$-Colouring}~\cite{MPPSV22}.  
And in~\cite{BJMOPPSV23}, {\sc Steiner Forest} was investigated as a problem that satisfies C2 and C3 but not C1.  
We consider the research question:

\medskip
\noindent
{\it How do \emph{C13-problems} --- that is, problems that satisfy C1 and C3 but not C2 --- behave for $H$-subgraph-free graphs? Can we still classify their computational complexity?}

\medskip
\noindent
Let us immediately note some redundancy in the definition of C13-problems: if a problem does not satisfy C2, then C3 is implied.  Nevertheless we retain the terminology to preserve the link to the approach of~\cite{JMOPPSV}. 
To show a problem is a C13 problem there are two requirements: that the problem is efficiently solvable both on classes of bounded treewidth and on subcubic classes.  In fact, the tractable cases 
for C123 problems rely on that the problems satisfy C1. 

\begin{theorem}[\cite{JMOPPSV}]\label{t-dicho3}
Let $\Pi$ be a problem that satisfies C1.
For a finite set ${\cal H}$, the problem $\Pi$ on ${\cal H}$-subgraph-free graphs is efficiently solvable if ${\cal H}$ contains a graph from ${\cal S}$.
\end{theorem}

\noindent
As an important step towards a full dichotomy for C13 problems, we restrict ourselves to considering $H$-subgraph-free graphs where $H$ is connected. We focus on five well-known \NP-complete problems that we will see are not C123 but C13-problems: {\sc Feedback Vertex Set}, {\sc Independent Feedback Vertex Set}, {\sc Connected Vertex Cover} and {\sc Matching Cut}. 
We introduce these problems below.
With one exception, we can recognize that they are C13 problems using known results.   

For a graph $G=(V,E)$, a set $W \subseteq V$ is a \emph{feedback vertex set} of $G$ if every cycle in $G$ contains a vertex of $W$.  Moreover, $W$ is an \emph{independent feedback vertex set} if 
$W$ is an independent set.
We note that $G$ has a feedback vertex set of size $k$ if and only if the $2$-subdivision of $G$ has an independent feedback vertex set of size~$k$.
A graph $G$ might contain no independent feedback vertex set: consider, for example, a complete graph on four or more vertices. 
The {\sc (Independent) Feedback Vertex Set} problem is to decide if a graph $G$ has an (independent) feedback vertex set of size at most~$k$ for some given integer $k$.

A set $W \subseteq V$ is a {\it connected vertex cover} of $G$ if every edge of $E$ is incident with a vertex of~$W$, and moreover $W$ induce a connected subgraph. The {\sc Connected Vertex Cover} problem is to decide if a graph $G$ has a connected vertex cover of size at most~$k$ for a given integer $k$.
A \emph{$k$-colouring} of $G$ is a function $c: V \rightarrow \{1, \ldots ,k\}$ such that for each edge $uv \in E$, $c(u) \neq c(v)$. The {\sc Colouring} problem is to decide if a graph $G$ has a $k$-colouring for some given integer~$k$. A \emph{matching cut} of a connected graph is a matching (set of pairwise non-adjacent edges) that is also an edge cut, i.e., its removal creates a disconnected graph. 
The {\sc Matching Cut} problem is to decide if a connected graph has a matching cut.

\subsection{Our Results}

Whereas {\sc Feedback Vertex Set} does have a polynomial-time algorithm on subcubic graphs~\cite{UKG88} and thus does not satisfy C2, a polynomial-time algorithm for {\sc Independent Feedback Vertex Set} on subcubic graphs was not previously known.
In Section~\ref{s-i}, we prove the following result addressing this gap in the literature.

\begin{theorem} \label{t-subcubic}
A minimum size independent feedback vertex set of every connected subcubic graph $G\neq K_4$ is also a minimum size feedback vertex set of $G$. Moreover, it is possible to find a minimum independent feedback vertex set of $G$ in polynomial time.
\end{theorem}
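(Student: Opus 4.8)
The plan is to prove the structural statement first and then read off the algorithm. Write $G=(V,E)$ for a connected subcubic graph with $G \neq K_4$, let $W$ be a minimum feedback vertex set, and let $F = G - W$, which is a forest. The starting observation is a clean degree bound: since $W$ has minimum size, for every $u \in W$ the set $W \setminus \{u\}$ is not a feedback vertex set, so $u$ lies on a cycle $C_u$ of $G$ that meets $W$ only in $u$. Such a cycle uses two edges at $u$ that leave $W$, so $u$ has at least two neighbours in $F$; as $\deg_G(u) \leq 3$ this forces $|N(u) \cap W| \leq 1$. Hence $G[W]$ has maximum degree at most one, i.e.\ $G[W]$ is a matching, and every endpoint of an edge of $G[W]$ has degree exactly $3$ in $G$. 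The task therefore reduces to removing the edges of this matching one at a time without increasing $|W|$.

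To do this I would set up an exchange argument. Fix a minimum feedback vertex set $W$ that, among all minimum feedback vertex sets, minimises the number $e(W)$ of edges of $G[W]$, and suppose for contradiction that $e(W) \geq 1$, witnessed by an edge $uv$ with $u,v \in W$. By the degree analysis $u$ has exactly two neighbours $a,b$ in $F$, and since $W\setminus\{u\}$ is not a feedback vertex set the graph $F + u$ contains a cycle; because $F$ is a forest and $u$ has only the two neighbours $a,b$ in it, this cycle is unique and equals $u$ together with the unique $a$--$b$ path $P_u$ in $F$. Consequently, for every vertex $w$ on $P_u$ the set $W' = (W \setminus \{u\}) \cup \{w\}$ is again a feedback vertex set of the same size, and a direct count gives $e(W') = e(W) - 1 + |N(w) \cap (W \setminus \{u\})|$, using that $v$ is the only neighbour of $u$ in $W$. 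Thus if some $w$ on $P_u$ has no neighbour in $W \setminus \{u\}$, the swap strictly decreases $e(W)$, contradicting minimality; the same reasoning applies symmetrically to $v$ and its path $P_v$.

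It remains to show that the swaps cannot all be blocked unless $G \cong K_4$, and this is the step I expect to be the main obstacle. Blocking means that every vertex of $P_u$ and of $P_v$ has a neighbour in $W$ other than $u,v$; combined with the fact that all the relevant vertices have degree exactly $3$, this pins down the local structure around the edge $uv$ very tightly. I would carry out a case analysis on the lengths of $P_u$ and $P_v$ and on how the neighbourhoods $\{a,b\}$ and $\{c,d\}$ of $u$ and $v$ interact (whether they are disjoint, share a vertex, or coincide), in each case using the degree-$3$ constraint to propagate forced adjacencies until either a previously unused swap appears or the vertices $u,v$ together with their forest-neighbours close up into a copy of $K_4$ that, by connectivity and subcubicity, must be all of $G$. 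Making this exhaustive --- and ruling out swaps that merely relocate a matching edge --- is the delicate part, and it is exactly where the $K_4$ exception enters.

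Finally, the algorithmic claim follows constructively. A minimum feedback vertex set of a subcubic graph can be computed in polynomial time~\cite{UKG88}; if $G = K_4$ then no independent feedback vertex set exists (every two vertices are adjacent) and there is nothing to do, and otherwise the structural argument is itself an algorithm: each beneficial swap is found by locating the unique cycle $C_u$ and a suitable vertex $w$ on $P_u$ in linear time, and since each swap reduces $e(W)$ by one there are at most $|W| \leq |V|$ of them. Applying swaps until $e(W) = 0$ yields a minimum feedback vertex set that is independent, which by the structural statement is a minimum independent feedback vertex set, all within polynomial time.
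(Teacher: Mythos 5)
Your opening analysis is correct and is a clean way to start: minimality of $W$ forces $G[W]$ to be a matching, the unique cycle $C_u = u \cup P_u$ exists because $u$ has exactly two neighbours $a,b$ in the forest $F$, every $w$ on $P_u$ yields an equal-size feedback vertex set $(W\setminus\{u\})\cup\{w\}$, and the count $e(W') = e(W)-1+|N(w)\cap(W\setminus\{u\})|$ is right. But the proof stops exactly where the theorem begins, and there are two concrete problems. First, your notion of ``blocked'' is misstated: the swap at $w\in P_u$ fails to decrease $e$ precisely when $w$ has a neighbour in $W\setminus\{u\}$, and that neighbour may be $v$ itself. Under your phrasing (``a neighbour in $W$ other than $u,v$''), the graph $K_4$ with $W=\{u,v\}$ would count as \emph{unblocked} --- the vertices $a,b$ of $P_u$ have no neighbours in $W$ beyond $u,v$ --- yet no swap decreases $e(W)$ there; so the dichotomy ``unblocked implies strict decrease'' is false as written. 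Adjacency to $v$ is a genuine blocker, and it is exactly what produces both the $K_4$ exception and the ``relocation'' swaps (where $e(W')=e(W)$ and the matching edge merely moves) that you acknowledge you cannot yet control.

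Second, and more seriously, the claim that blockedness ``pins down the local structure around the edge $uv$ very tightly'' is not true, so the deferred case analysis is not a finite local check. Blockedness only says that each vertex of $P_u$ (and $P_v$) has its third neighbour somewhere in $W\setminus\{u\}$; these neighbours can be many distinct, arbitrarily distant vertices of $W$, each with its own private cycle, and nothing bounds the lengths of $P_u,P_v$ or the pattern of attachments. Deriving $G=K_4$ from a blocked configuration therefore requires a global argument, not propagation of forced adjacencies near $uv$; this missing step is the entire content of the theorem. It is worth comparing with how the paper handles it: instead of exchanging inside a fixed minimum feedback vertex set, it grows a \emph{nonseparating} independent set $J$ inside a feedback vertex set of degree-$3$ vertices, first modifying until $G-J$ is a nice cactus (no two cycles share a vertex), and then uses the tree-like structure of the cactus to pick one vertex per cycle forming an independent set (the set $S(C)$ of vertices of the other cycles nearest a fixed cycle $C$), with a final exchange argument for the residual ``very nice cactus'' case --- and it is only in that last, structurally fully understood situation that $K_4$ appears as the unique obstruction. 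Until you supply an argument of comparable global reach for your blocked case, the structural statement --- and hence also the algorithm, which needs to know that being stuck implies $G=K_4$ --- remains unproven.
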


\noindent
Hence, both {\sc Feedback Vertex Set} and {\sc Independent Feedback Vertex Set} are C13.
The other problems are also C13. Namely, {\sc Connected Vertex Cover} satisfies C1~\cite{ALS91} and is polynomial-time solvable on subcubic graphs~\cite{UKG88} so does not satisfy C2, while 
{\sc Colouring} also satisfies C1~\cite{ALS91} but not C2 due to Brooks' Theorem~\cite{Brooks41}. Finally, {\sc Matching Cut} satisfies C1~\cite{Bo09} but not C2, due to a polynomial-time algorithm for subcubic graphs~\cite{Ch84}.

The \emph{star} $K_{1,s}$ is the graph that contains a vertex of degree $s$ whose neighbours each have degree~$1$.  A \emph{subdivided star} is obtained from a star by subdividing one or more of its edges.

\begin{definition}\label{def-claw}
An $S_{w,x,y,z}$ is a graph formed by subdividing each edge of a $K_{1,4}$, $w-1, x-1, y-1,$ and $z-1$ times. Each of the subdivided edges is called a \emph{tentacle}.   The vertex of degree~$4$ is the \emph{centre}.  
\end{definition}

\noindent
In Section~\ref{s-treedepth}, we investigate the structure of $H$-subgraph-free graphs when $H$ is a subdivided star and use this in Section~\ref{s-algos} to 
show a general approach to C13 problems that requires some additional extra properties (that they can be solved componentwise after, possibly, the removal of bridges).  This is sufficient to obtain the following result.

\begin{theorem} \label{cor:algo:s11qr}
Let $q$ and $r$ be positive integers.
The following problems can be solved in polynomial time on $S_{1,1,q,r}$-subgraph-free graphs: {\sc Feedback Vertex Set}, {\sc Independent Feedback Vertex Set}, {\sc Connected Vertex Cover}, {\sc Colouring} and {\sc Matching Cut}.
\end{theorem}

\noindent
In Section~\ref{s-hardness}, we obtain a hardness result.

\begin{theorem}\label{thm:hardness}
	{\sc Feedback Vertex Set} and {\sc Independent Feedback Vertex Set} are \NP-complete on the class of  $S_{2,2,2,2}$-subgraph free graphs that have maximum degree~$4$.
\end{theorem}

\subsection{State-of-the-Art Summaries}

We now state
complexity classifications for each of the problems. These results, proved in Section~\ref{s-proofs}, combine the results above with a number of other results from \cite{Ch84,FLPR23,GJ79,GJS76,GPR15,MP96,Munaro17,Poljak1974,Sp83}.  None of these papers presented general results for C13 problems. However, we note, for example, that hardness when $H$ contains a cycle follows from past results on classes of bounded girth which were proved separately for each problem, but often using a similar technique.  There are other results that just apply to one or two of the problems.

\begin{theorem}\label{t-dichofvs}
 Let $H$ be a connected graph.
On $H$-subgraph-free graphs,  {\sc Feedback Vertex Set} and {\sc Independent Feedback Vertex Set} are solvable in polynomial time if $H \in {\cal S }  \cup \{S_{1,1,q,r} \mid q \geq r \geq 1\}$.  They are \NP-complete if $H$ contains a cycle or more than one vertex of degree at least $3$ or $H \in \{K_{1,5}, S_{2,2,2,2}\}$.
\end{theorem}


\begin{theorem}\label{t-dichocvc}
 Let $H$ be a connected graph.
On $H$-subgraph-free graphs,  {\sc Connected Vertex Cover} is solvable in polynomial time if $H \in {\cal S }  \cup \{S_{1,1,q,r} \mid q \geq r \geq 1\}$.  It is \NP-complete if $H$ contains a cycle or $H=K_{1,5}$.
\end{theorem}

\noindent
The following result refers to trees defined in Figure~\ref{fig:T}.

\begin{theorem}\label{t-dichocol}
 Let $H$ be a connected graph.
On $H$-subgraph-free graphs,  {\sc Colouring} is solvable in polynomial time if $H \in {\cal S } \cup \{S_{1,1,q,r} \mid q \geq r \geq 1\}$ or if $H$ is a forest  with maximum degree~$4$ and at most seven vertices.  It is \NP-complete if $H$ contains a cycle, or $H \in \{K_{1,5}, S_{2,2,2,2}\}$, or if  $H$ contains a subdivision of the tree $T_1$  as a subgraph, or~$H$ contains as a subgraph the tree obtained from  $T_2$ after subdividing the edge $st$ $p$ times, $0\leq p\leq 9$, or $H$ contains one of the trees $S_{2,2,2,2},T_4,T_5,T_6$ as a subgraph.
\end{theorem}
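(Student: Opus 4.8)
The plan is to treat \autoref{t-dichocol} as an assembly of tractable and hard cases, exploiting monotonicity under the subgraph relation: if $H_1$ is a subgraph of $H_2$, then every $H_1$-subgraph-free graph is $H_2$-subgraph-free, so polynomial-time solvability transfers from $H_2$ to $H_1$ and \NP-completeness transfers from $H_1$ to $H_2$. Since the stated classification is only partial (there is no claim that the two lists exhaust all connected $H$), it suffices to prove polynomial-time solvability for the \emph{largest} forbidden graphs on the tractable list and \NP-completeness for the \emph{minimal} forbidden graphs on the hard list, and then propagate along the subgraph order. First I would fix, once and for all, the small trees of Figure~\ref{fig:T} and record their sizes and degree sequences, so that consistency of the two lists (that no connected $H$ lands on both) can be checked directly; in particular $S_{2,2,2,2}$ has nine vertices and $K_{1,5}$ has maximum degree~$5$, which is what keeps them outside the tractable ``at most seven vertices, maximum degree~$4$'' regime.

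For the tractable side, the cases $H\in{\cal S}$ and $H\in\{S_{1,1,q,r}\}$ are immediate from the framework: {\sc Colouring} satisfies C1~\cite{ALS91}, so Theorem~\ref{t-dicho3} handles $H\in{\cal S}$, and Theorem~\ref{cor:algo:s11qr} handles every $S_{1,1,q,r}$ (this already subsumes the subcubic case $K_{1,4}=S_{1,1,1,1}$, where tractability ultimately rests on Brooks' Theorem~\cite{Brooks41}). The genuinely new work is the family of forests of maximum degree~$4$ on at most seven vertices. Here I would first observe that every such tree with a single branch vertex is a subdivided claw (hence in ${\cal S}$) or a spider $S_{a,b,c,d}$ with $a+b+c+d\le 6$; the latter forces at least two legs of length~$1$ and so equals some $S_{1,1,q,r}$, and is therefore already covered. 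The remaining trees have at least two branch vertices, and for each such shape I would supply a direct polynomial-time {\sc Colouring} algorithm on the corresponding subgraph-free class, using the structural analysis of Section~\ref{s-treedepth} together with the small-tree colouring results of~\cite{GPR15}.

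For the hard side I would argue minimal cases and then lift. If $H$ contains a cycle of length $\ell$, then every graph of girth exceeding $\ell$ is $H$-subgraph-free, and {\sc Colouring} is \NP-complete on graphs of girth at least any fixed constant (see e.g.~\cite{MP96}), so hardness follows by monotonicity. If $H=K_{1,5}$, then $H$-subgraph-free means maximum degree at most~$4$, on which {\sc Colouring} is \NP-complete~\cite{GJS76}. For $S_{2,2,2,2}$ and the trees $T_1,T_2,T_4,T_5,T_6$ of Figure~\ref{fig:T}, I would invoke and, where needed, adapt the reductions of~\cite{GPR15}; the two clauses phrased with subdivisions --- ``contains a subdivision of $T_1$'' and ``contains $T_2$ with $st$ subdivided $p$ times, $0\le p\le 9$'' --- require the underlying reductions to remain valid after edge subdivision, which is precisely the subdivision-invariance exploited for the edge-subdivision arguments of~\cite{JMOPPSV}. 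Each minimal hard tree then propagates to every connected $H$ containing it as a (possibly subdivided) subgraph.

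The hard part will be the tractable family of small forests with at least two branch vertices. Unlike the ${\cal S}$ and $S_{1,1,q,r}$ cases, forbidding such a tree need \emph{not} bound the treewidth --- for instance the $2$-subdivision of a cubic graph contains no double star on six vertices, since every edge has an endpoint of degree~$2$, yet its treewidth is unbounded --- so C1 cannot be applied and a bespoke colouring argument is required for each shape. Producing these algorithms (or extracting them cleanly from~\cite{GPR15}), and simultaneously verifying that none of these small trees contains one of the hard structures $T_1,T_2,T_4,T_5,T_6$ or $S_{2,2,2,2}$ (which would collapse the two lists into contradiction), is where the real care lies; by comparison the cycle, $K_{1,5}$, and subdivision clauses are comparatively routine consequences of known hardness results combined with subgraph monotonicity.
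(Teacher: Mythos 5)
Your proposal is correct and follows essentially the same route as the paper: both assemble the classification from Theorem~\ref{t-dicho3} (for $H\in{\cal S}$), Theorem~\ref{cor:algo:s11qr} (for $H=S_{1,1,q,r}$), girth-based hardness~\cite{MP96} when $H$ contains a cycle, maximum-degree-$4$ hardness~\cite{GJS76} for $K_{1,5}$, and the prior partial classification of~\cite{GPR15} for every remaining tree case. The only difference is emphasis: the small forests of maximum degree~$4$ on at most seven vertices and the hard trees $T_1,\dots,T_6$ (including the subdivision clauses) require no bespoke algorithms or adapted reductions --- the paper simply cites~\cite{GPR15}, where exactly these cases were already proved.
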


\begin{figure}[ht]
\centering
		\includegraphics[width=0.75\textwidth]{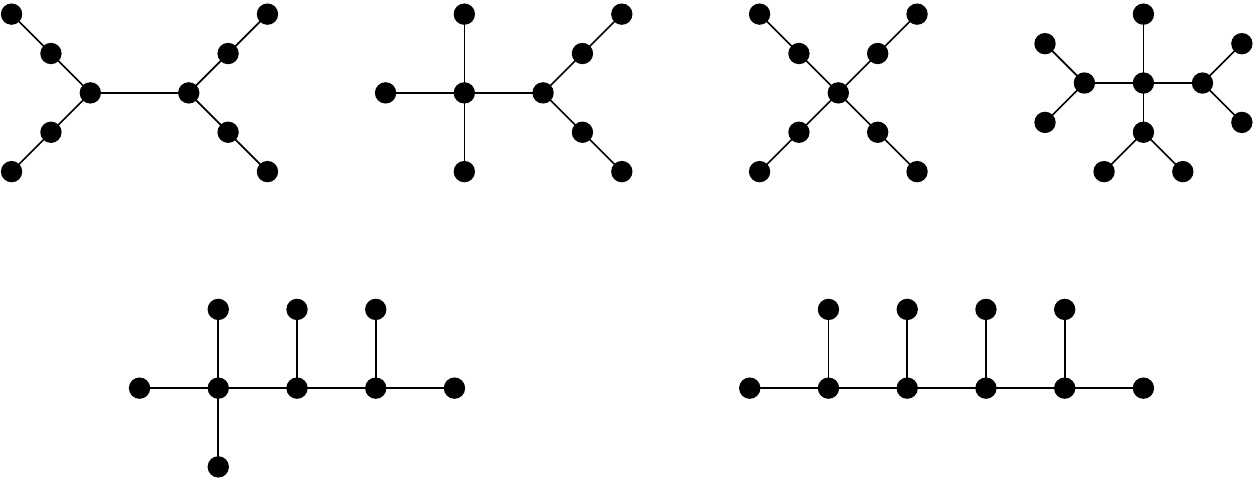}
\caption{Illustration of the trees $T_1,\ldots, T_6$ reproduced from~\cite{GPR15}; note that $T_3=S_{2,2,2,2}$.
\label{fig:T}}
\end{figure}

\begin{theorem}\label{t-dichomc}
 Let $H$ be a connected graph.
On $H$-subgraph-free graphs,  {\sc Matching Cut} is solvable in polynomial time if $H \in {\cal S }  \cup \{S_{1,1,q,r} \mid q \geq r \geq 1\}$.  It is \NP-complete if $H$ contains a cycle or $H=K_{1,5}$.
\end{theorem}

\section{Independent Feedback Vertex Sets of Subcubic Graphs}\label{s-i}

In~\cite{UKG88}, Ueno, Kajitani and Gotoh gave a polynomial-time algorithm for {\sc Feedback Vertex Set} on subcubic graphs.  In this section, we prove Theorem~\ref{t-subcubic} by showing that {\sc Independent Feedback Vertex Set} is also polynomially-time solvable on subcubic graphs
 by demonstrating that the problems are alike as, for any subcubic graph, one can find a minimum size feedback vertex set that is also an independent set (with a single exceptional case).  As the problems can be solved componentwise, we consider only connected graphs. 
 
In fact, we are going to prove a result that is an expansion of Theorem~\ref{t-subcubic} that will come in handy later.   We need some definitions.  A \emph{cactus} is a graph in which no two cycles have an edge in common.  A cactus is \emph{nice} if no two cycles have a vertex in common (every subcubic cactus is nice since if two cycles share a vertex but not an edge, we can find a vertex of degree~$4$).  A cactus is \emph{very nice} if every vertex belongs to exactly one cycle.
 
\begin{theorem} \label{t-subcubic2}
Let $G$ be a connected subcubic graph.  Then a minimum size independent feedback vertex set of $G$ is also a minimum size feedback vertex set of $G$ if and only if $G \neq K_4$.  Moreover, if $G \neq K_4$ there is a minimum size independent feedback vertex set of $G$ that contains only vertices of degree~$3$ if and only if $G$ is not a very nice cactus.   There is a polynomial-time algorithm to find a minimum size independent feedback vertex set and if $G$ is not a very nice cactus it finds a set that contains only vertices of degree~$3$.
\end{theorem}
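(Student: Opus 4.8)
The plan is to combine a structural description of minimum feedback vertex sets in subcubic graphs with a local exchange (``sliding'') argument, using the Ueno--Kajitani--Gotoh algorithm~\cite{UKG88} as a black box for computing a minimum feedback vertex set in polynomial time. Throughout I write $F = G - W$ for the forest left after deleting a feedback vertex set $W$. First I would record the key structural fact: for any \emph{minimum} feedback vertex set $W$, every $u \in W$ has at least two neighbours in $F$ (otherwise $F + u$ is acyclic and $W - u$ is still a feedback vertex set, contradicting minimality). Since $G$ is subcubic, each $u \in W$ therefore has at most one neighbour in $W$, so $G[W]$ is a \emph{matching}. Two consequences are worth isolating: a vertex of $W$ of degree $2$ has both neighbours in $F$ and so is isolated in $G[W]$, whence every edge of the matching $G[W]$ joins two degree-$3$ vertices; and for $K_4$ the minimum feedback vertex set has size $2$ while the largest independent set has size $1$, so $K_4$ has no independent feedback vertex set at all, settling the ``only if'' direction of the first equivalence.

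The heart of the proof is to convert a minimum feedback vertex set into an independent one when $G \neq K_4$. I would choose a minimum feedback vertex set $W$ minimising the number of edges of $G[W]$ and suppose, for contradiction, that an edge $uv$ survives. Then $u$ (degree $3$) has exactly two forest-neighbours $a,b$, lying in a common tree of $F$, and the slide $W' = (W - u) + a$ is again a minimum feedback vertex set, because in $(F - a) + u$ the vertex $u$ is attached only to $b$ and hangs as a leaf, keeping the graph acyclic. The only possible new edge of $G[W']$ joins $a$ to its \emph{third} neighbour, since $a$'s other two neighbours ($u$ and the successor of $a$ on the fundamental cycle) lie in $F$. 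Hence if any of the four slides $u\mapsto a$, $u\mapsto b$, $v\mapsto c$, $v\mapsto d$ moves a vertex whose third neighbour lies in $F$, the edge $uv$ vanishes and no edge is created, contradicting the choice of $W$. \textbf{The main obstacle} is the ``stuck'' case in which all four third-neighbours lie in $W$, so every slide merely relocates a matching edge. Here I would argue directly that the four forced edges together with the fundamental cycles $C_u,C_v$ confine the neighbourhood of $\{u,v\}$ so tightly that, in a connected subcubic graph, it closes up into $K_4$ --- the genuine exception --- while in all other configurations a secondary potential (preferring slides that raise the number of degree-$3$ vertices of $W$, or that shorten fundamental cycles) forces progress and guarantees termination. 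Proving rigorously that ``permanently stuck $\Rightarrow G = K_4$'' is the technical crux.

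For the degree-$3$ refinement I would start from an independent minimum feedback vertex set and eliminate its degree-$2$ members. A degree-$2$ vertex $w \in W$ lies on its fundamental cycle and can be slid along that cycle onto a degree-$3$ vertex, unless the cycle consists entirely of degree-$2$ vertices (forcing $G=C_n$, a single cycle) or the only reachable degree-$3$ vertices are bridge-endpoints whose choices conflict across bridges. This last situation is exactly the very nice cactus case: there the cycles are vertex-disjoint, a minimum feedback vertex set must take exactly one vertex per cycle, each leaf cycle forces its unique bridge-endpoint, and propagating these forced choices blocks some cycle from receiving an independent degree-$3$ vertex, proving the ``only if'' of the second equivalence. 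Conversely, if $G$ is not a very nice cactus then either some vertex lies on no cycle (a degree-$\le 1$ vertex or an interior bridge vertex, giving a free direction to slide into a degree-$3$ vertex) or two cycles share an edge (supplying extra degree-$3$ vertices); in both cases the biased slide succeeds, proving the ``if'' direction. Note that deleting pendant vertices would alter the very-nice-cactus status, so for this part I work on $G$ itself rather than on its reduced core.

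Finally, for the algorithm I would compute a minimum feedback vertex set with~\cite{UKG88} and then apply the edge-reducing slides of the second paragraph: each slide is computable in polynomial time and strictly decreases the bounded potential $|E(G[W])|$ (ties broken by the secondary potential), so after polynomially many steps $W$ is independent. When $G$ is not a very nice cactus, which is testable in polynomial time, I would additionally run the degree-$2$-eliminating slides to return a minimum independent feedback vertex set using only vertices of degree~$3$, as required.
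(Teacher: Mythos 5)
Your setup is sound: for a minimum feedback vertex set $W$ of a subcubic graph, every vertex of $W$ has at least two neighbours in the forest $G-W$ (by minimality), so $G[W]$ is a matching whose edges join degree-$3$ vertices, and $K_4$ has no independent feedback vertex set at all. The slide $W' = (W - u) + a$ is also correctly analysed: since every cycle of $(G-W)+u$ passes through both forest-neighbours of $u$, at most one new matching edge can appear, from $a$ to its third neighbour. But the proof then stops exactly where the theorem's difficulty lies. You explicitly leave unproven the claim that a ``permanently stuck'' configuration forces $G = K_4$, and this claim is essentially equivalent to the theorem itself: saying that an edge-minimising minimum feedback vertex set of a graph $G \neq K_4$ cannot retain an edge is just a restatement of the first assertion of the theorem. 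The fallback ``secondary potential'' (preferring slides that raise the number of degree-$3$ vertices of $W$ or shorten fundamental cycles) is never shown to decrease monotonically or to rule out cycling of the exchange process, so neither the extremal contradiction nor the algorithmic termination is established. The same looseness affects the second half: for the very nice cactus direction you assert that a minimum feedback vertex set ``must take'' bridge-endpoints and that forced choices ``propagate,'' whereas what is actually needed (and what the paper proves by a counting argument) is that the $2(k-1)$ degree-$3$ vertices of a very nice cactus with $k$ cycles form $k-1$ adjacent pairs, so no $k$ of them are independent; and for the converse your two cases (a vertex on no cycle, or two cycles sharing an edge) are correct as a dichotomy but the claim that ``the biased slide succeeds'' in each is not an argument.

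For comparison, the paper avoids local exchanges on $W$ altogether and instead grows an independent, \emph{nonseparating} set $J$ alongside a shrinking feedback vertex set $F$ of degree-$3$ vertices, maintaining $J \subseteq F$ and connectivity of $G-J$. It first shows (Claim~1) that one can always either absorb a vertex into $J$ or shift a vertex of $F$ along a $2$-connected subgraph until $G-J$ becomes a nice cactus; the tree-like structure of the cactus then supplies an explicit completion $J \cup S(C) \cup \{v\}$, where $S(C)$ picks from each other cycle the vertex nearest to a chosen cycle $C$, and a final case analysis on a vertex $j \in J$ and its three neighbours in the cactus handles the situation where $G-J$ is very nice (using $G \neq K_4$ to find two nonadjacent neighbours). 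That global construction is precisely the machinery that your stuck-case analysis would have to reinvent; without it, the proposal is a reduction of the theorem to its hardest step rather than a proof.
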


\begin{proof}
It will be seen that the proof implies a polynomial-time algorithm for finding an independent feedback vertex set of the size no greater than a given feedback vertex set.

A feedback vertex set of $K_4$ must contain at least two vertices and so~$K_4$ has no independent feedback vertex set.  In a very nice cactus, the minimum size of a feedback vertex set is equal to the number of cycles and one can easily find such a set that is independent if one permits the inclusion of degree 2 vertices.  (For example, pick an arbitrary vertex $v$ and form an independent feedback vertex set by taking the vertex in each cycle that is farthest from $v$.)  If there are $k$ (disjoint) cycles, then, considering the tree-like structure of a very nice cactus, there are $2(k-1)$ vertices of degree~$3$ that can be considered as $k-1$ adjacent pairs.  Thus no set of $k$ vertices of degree~$3$ is independent.

So suppose that $G \neq K_4$ is not a very nice cactus.  Of course, we may as well also assume that $G$ is not a tree.  Let $F$ be a feedback vertex set of~$G$.  To prove the theorem, we must show that we can find an independent feedback vertex set of~$G$ that is no larger than $F$.  We can assume that $F$ contains only vertices of degree~$3$ since any vertex of degree 2 can be replaced by a nearest vertex of degree~$3$.  As $G$ is neither a tree nor a cycle (a cycle is a very nice cactus), we know that $G$ has vertices of degree~$3$.

Let $J = \emptyset$.   Our approach is to add vertices to $J$ until it forms an independent feedback vertex set.
We make some trivial but useful statements:
\begin{enumerate}
\item $F$ is a feedback vertex set containing only vertices of degree~$3$,
\item $J \subseteq F$, and
\item $J$ is a nonseparating independent set of $G$; that is, no pair of vertices of $J$ are joined by an edge and $G-J$ is connected.
\end{enumerate}
We will repeatedly modify $F$ and $J$ in such a way that these three statements remain true and the size of $F$ does not increase and it remains a feedback vertex set. 
We can make the following changes without contradicting the three statements. 
\begin{itemize}

\item  We can add a vertex $x \in F \setminus J$ to $J$ if $x$ has no neighbour in $J$ and is not a cutvertex in~$G-J$.
\item If $x \in F \setminus J$, we can redefine $F$ as $F \setminus \{x\} \cup \{y\} $ if $y$ is a vertex 
that belongs to every cycle of $G - (F \setminus \{x\})$ and has degree~$3$ (that is, $y$ belongs to every cycle of $G$ that contains $x$ but no other vertex of $F$).
\end{itemize}

 Our initial aim is to make changes so that $G-J$ is a graph where no two cycles have a vertex in common; that is, it is a nice cactus.


\begin{claim}  \label{claim:cactus}
We can modify $F$ and $J$ until $G-J$ is a nice cactus.
\end{claim}
\begin{claimproof}
Assume $G-J$ contains two cycles with a common vertex, and, therefore, as $G$ is subcubic, a common edge, else we are done.  
Consider a subgraph $K$ induced by two cycles of $G-J$ that have a common edge (so $K$ is 2-connected and has no cutvertex). Of course,~$F$ must contain at least one vertex of $K$; let $r$ be such a vertex.
 
If $r$ has degree~$3$ in $K$, then we can add it to $J$ since it has three neighbours in $G-J$ (so none in $J$) and is not a cutvertex in $G-J$ since $K - \{r\}$ is connected.  


Otherwise $r$ has degree 2 in $K$.
Traversing edges of $K$ away from $r$ in either direction, let~$p$ and $q$ be the first vertices of degree~$3$ in $K$ that are reached (and $p \neq q$ by the definition of~$K$). 
   Let $r'$ be the first vertex of degree~$3$ in $G$ reached from $r$ on the path in $K$ towards $p$.  
   
   If $r$ has a neighbour $j \in J$, then we can redefine~$F$ as $F \setminus \{r\} \cup \{r'\}$ since every cycle in~$G$ containing $r$ also contains either~$j$ or~$r'$.  Suppose instead that $r$ has no neighbour in $J$.  Let~$r''$ be the neighbour of $r$ in $G-J$ but not $K$.  If $r$ is not a cutvertex in $G-J$, then we can add $r$ to $J$.  If $r$ is a cutvertex in $G-J$, then no cycle of $G-J$ includes the edge $rr''$.  Thus, again, we can redefine $F$ as $F \setminus \{r\} \cup \{r'\}$.  
   
   So we either add a vertex to $J$ or modify $F$ by replacing a vertex with another that is closer in $K$ to $p$.  By repetition, we either add a vertex to $J$ or modify $F$ to include~$p$ in which case, as noted above, we can add $p$ to $J$.  Therefore, if $G-J$ contains two cycles with a common edge, we can increase the size of $J$ and so, ultimately, we can assume that $G-J$ contains no such pair of cycles and is a nice cactus.  This completes the proof of Claim~\ref{claim:cactus}.
\end{claimproof}

Let $H= G-J$. . 
By Claim~\ref{claim:cactus},
the cycles of $H$ are vertex disjoint and the graph has a treelike structure: if one replaces each cycle by a single vertex, then a tree is obtained.  As $F$ must contain at least one vertex of each cycle of $H$, if we add to $J$ one vertex chosen from each cycle of~$H$ (in any way), it will be no larger than $F$.   If we can do this in such a way that $J$ is an independent set and each vertex has degree~$3$, then the proof will be complete.
Thus we must describe how to choose a degree~$3$ vertex from each cycle of $H$ such that the union of these vertices and $J$ is an independent set, possibly after some further minor modifications.
The reasoning about these modifications will require that $H$ is connected so the requirement above that $J$ be nonseparating was needed.

If $H$ contains no cycles, then $J$ is already an independent feedback vertex set and there is nothing to prove.
Otherwise, let $C$ be a cycle of $H$.  Let $S(C)$ be the set of vertices that contains, for each cycle $C'$ of $H$ other than $C$, the vertex of $C'$ that is nearest to $C$ in~$H$.  See Figure~\ref{fig-subcubiccactus}. Each vertex $v$ of $S(C)$ has degree~3 in $H$ since it has two neighbours in a cycle $C'$ and a neighbour not in $C'$ on the  
path from $v$ to $C$.  Thus no vertex of $S(C)$ has a neighbour in $J$.  Moreover, clearly $S(C)$ is an independent set.  Thus $J \cup S(C)$ is an independent set that covers every cycle of $G$ except $C$.   
For a vertex $v$ in $C$, let $F(v)=J \cup S(C) \cup \{v\}$.
 If we can find a cycle $C$ that contains a vertex $v$ of degree~$3$ not adjacent to~$J$ or to another cycle in $H$, then $F(v)$ is an independent feedback vertex set and we are done.

\tikzstyle{vertex}=[circle,draw=black, minimum size=6pt, inner sep=0pt]
\tikzstyle{edge} =[draw,thick,-,black,>=triangle 90]

\begin{figure} 
\begin{center}
\begin{tikzpicture}[scale=1]

     \node at  (8,4) {$C$};

\begin{scope}[xshift=8cm, yshift=4cm] 
\foreach \pos/ \name in {{(0,1)/c1}, {(0.951,0.309)/c2}, {(0.588,-0.809)/c3},
                            {(-0.588,-0.809)/c4}, {(-0.951,0.309)/c5}}
        \node[vertex, fill=black]  (\name) at \pos {};
     \foreach \source/ \dest  in {c1/c2, c2/c3, c3/c4, c4/c5, c5/c1}
       \path[edge] (\source) --  (\dest);
\end{scope}

\begin{scope}[xshift=7.5cm, yshift=6cm] 
\foreach \pos/ \name in {{(0,1)/2}, {(1,1)/3},
                            {(1,0)/4}}
        \node[vertex, fill=black]  (\name) at \pos {};
        \node[vertex, fill=white]  (a1) at (0,0) {};

     \foreach \source/ \dest  in {a1/2, 2/3, 3/4, 4/a1}
       \path[edge] (\source) --  (\dest);
\end{scope}

\begin{scope}[xshift=11cm, yshift=5.5cm] 
\foreach \pos/ \name in {{(0,1)/2}, {(1,1)/3},
                            {(1,0)/4}}
        \node[vertex, fill=black]  (\name) at \pos {};
        \node[vertex, fill=white]  (b1) at (0,0) {};

     \foreach \source/ \dest  in {b1/2, 2/3, 3/4, 4/b1}
       \path[edge] (\source) --  (\dest);
\end{scope}

\begin{scope}[xshift=5cm, yshift=5.5cm] 
\foreach \pos/ \name in {{(0,0)/e4}, {(0.5,0.87)/3}}
        \node[vertex, fill=black]  (\name) at \pos {};
                \node[vertex, fill=white]  (a2) at (1,0) {};
     \foreach \source/ \dest  in {e4/a2, a2/3, 3/e4}
       \path[edge] (\source) --  (\dest);
\end{scope}

\begin{scope}[xshift=3cm, yshift=5.5cm] 
\foreach \pos/ \name in {{(0,0)/1}, {(0.5,0.87)/3}}
        \node[vertex, fill=black]  (\name) at \pos {};
                \node[vertex, fill=white]  (b2) at (1,0) {};
     \foreach \source/ \dest  in {1/b2, b2/3, 3/1}
       \path[edge] (\source) --  (\dest);
\end{scope}

     \node at  (6,4.3) {$w$};
     \node at  (7.4,2.9) {$v$};

        \node[vertex, fill=black]  (w1) at (6,4) {};

        \node[vertex, fill=black]  (w) at (6,3) {};
        \node[vertex, fill=black]  (x) at (5,3) {};

        \node[vertex, fill=black]  (w2) at (5,4.5) {};
        \node[vertex, fill=black]  (w3) at (13,5.3) {};
\node[vertex, fill=black]  (w4) at (13,6.3) {};

\begin{scope}[xshift=3cm, yshift=3cm] 
\foreach \pos/ \name in {{(0,0)/1}, {(0.5,-0.87)/3}}
        \node[vertex, fill=black]  (\name) at \pos {};
                \node[vertex, fill=white]  (f2) at (1,0) {};
     \foreach \source/ \dest  in {1/f2, f2/3, 3/1}
       \path[edge] (\source) --  (\dest);
\end{scope}

\begin{scope}[xshift=10cm, yshift=2.5cm] 
\foreach \pos/ \name in {{(0.5,0.87)/g2}, {(1,0)/g3}}
        \node[vertex, fill=black]  (\name) at \pos {};
                \node[vertex, fill=white]  (g1) at (0,0) {};
     \foreach \source/ \dest  in {g1/g2, g2/g3, g3/g1}
       \path[edge] (\source) --  (\dest);
\end{scope}

\begin{scope}[xshift=12cm, yshift=3.5cm] 
\foreach \pos/ \name in {{(0,1)/2}, {(1,1)/k3},
                            {(1,0)/4}}
        \node[vertex, fill=black]  (\name) at \pos {};
        \node[vertex, fill=white]  (h1) at (0,0) {};

     \foreach \source/ \dest  in {h1/2, 2/k3, k3/4, 4/h1}
       \path[edge] (\source) --  (\dest);
\end{scope}

\begin{scope}[xshift=12cm, yshift=1.5cm] 
\foreach \pos/ \name in {{(0,0)/2}, {(1,1)/3},
                            {(1,0)/4}}
        \node[vertex, fill=black]  (\name) at \pos {};
        \node[vertex, fill=white]  (i1) at (0,1) {};

     \foreach \source/ \dest  in {i1/2, 4/2, 3/4, 3/i1}
       \path[edge] (\source) --  (\dest);
\end{scope}

    \foreach \source/ \dest  in {c1/a1, c2/b1, a2/c5, e4/b2, x/w, w/c4, f2/x, c3/g1, h1/g2, g3/i1, w/w1, w2/e4, w3/w4, k3/w3}
      \path[edge] (\source) --  (\dest);

\end{tikzpicture}
\end{center}
\caption{A nice subcubic cactus.  The central 5-cycle is denoted $C$ and the white vertices form the set $S(C)$.  Note that $w$ does not belong to any cycle and~$v$ is the nearest vertex  to $w$ in a cycle.  Thus $S(C) \cup \{v\}$ is an independent feedback vertex set for the graph.}
\label{fig-subcubiccactus}
\end{figure}
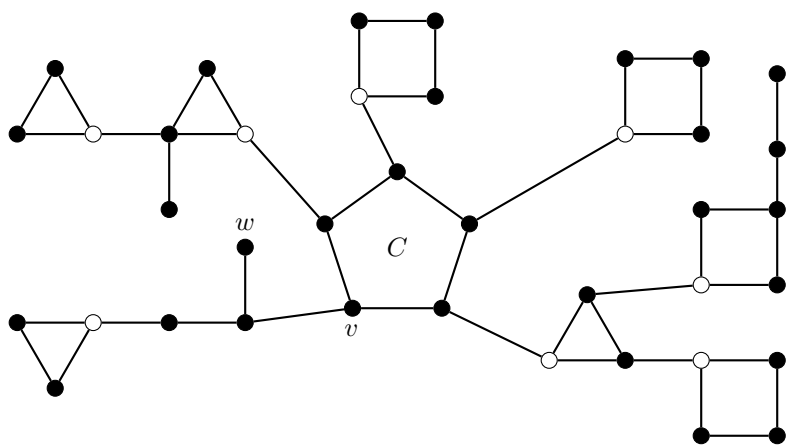

Suppose instead that no such cycle can be found.  Notice that this implies that every vertex of $H$ belongs to a cycle.  (If there was a vertex $w$ not in a cycle, then let $v$ be a nearest vertex to $w$ in a cycle and then $F(v)$ is an independent feedback vertex set of degree~$3$ vertices; again, see Figure~\ref{fig-subcubiccactus}.)  So $H$ is a very nice cactus and, by assumption, $J \neq \emptyset$.

Let $j$ be a vertex in $J$ with neighbours $v_1$, $v_2$ and $v_3$ in $H$.  Suppose that these three vertices are in the same cycle $C$ of $H$.  If $C$ is a 3-cycle, then $\{j,v_1,v_2,v_3\}$ induces $K_4$, a contradiction.  So we can assume that $v_1$ and $v_2$ are not adjacent.  Then $J_1 = J \setminus \{j\} \cup \{v_1, v_2\} \cup S(C)$ is an independent feedback vertex set of degree~$3$ vertices and $|J_1|=|F|$.  Quick check: all cycles are covered by $J_1$ since $v_1$ and $S(C)$ cover the cycles of $H$ and every cycle containing $j$ includes at least one of $v_1$ and $v_2$; $J_1$ is independent as $v_1$ and~$v_2$ have degree 2 in $H$ so no other neighbour in $J$ and are not adjacent to vertices, such as those of $S(C)$, that do not belong to $C$, and the vertices of $S(C)$ have degree~$3$ in $H$ so no neighbours in $J$.

Suppose instead that $v_1$, $v_2$ and $v_3$ do not all belong to the same cycle.  Let $C$ be the cycle that contains $v_1$ and suppose that $v_2$ and $v_3$ do not  belong to the same cycle as each other (one might belong to $C$).
Then $J_2 = J \setminus \{j\} \cup \{v_1\} \cup S(C)$ is an independent feedback vertex set of degree~$3$ vertices and $|J_1|=|F|-1$.  Quick check: all cycles are covered by $J_2$ since $v_1$ and $S(C)$ cover the cycles of~$H$ and every cycle containing $j$ includes either~$v_1$ or both $v_2$ and $v_3$ and all the paths from $v_2$ to $v_3$ (that do not include $j$) go through either a vertex of $J$ or a vertex of $S(C)$ as they are in different cycles in $H$; $J_2$ is independent as $v_1$ has degree 2 in~$H$ so, as before, no other neighbour in $J$ or $S(C)$, and the vertices of $S(C)$ have degree~$3$ in~$H$ so no neighbours in $J$.
 \end{proof}

\section{Graphs Excluding Subdivided Stars as a Subgraph: Structure}  \label{s-treedepth}

Recall that the treedepth of a graph $G$ is the minimum height of a forest $F$ such that for every pair of vertices in $G$ one is the ancestor of the other in $F$. 
It is well-known that the treewidth of a graph is at most its treedepth.  
In this section, we aim to show that $H$-subgraph-free graphs, for certain $H$, have bounded treedepth.  Then we know that problems that are tractable on classes of bounded treewidth are also tractable on these classes.
Before presenting our results, we need the following result from~\cite{NOM12}.

\begin{theorem}[\cite{NOM12}] \label{thm:longpath}
Let $G$ be a graph of treedepth at least $d$. Then $G$ has a subgraph isomorphic to a path of length at least $d$.
\end{theorem}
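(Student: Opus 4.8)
The plan is to prove the statement in its direct form: from any graph $G$ I will exhibit a path whose length is at least the treedepth of $G$. The key tool is a depth-first search (DFS) spanning forest, which simultaneously certifies an upper bound on the treedepth and produces an honest path inside $G$. This is cleaner than an induction on the recursive definition of treedepth, because the same object does both jobs at once.

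First I would recall the elimination-forest reading of the definition given above: $\mathrm{td}(G)$ is the minimum, over all rooted forests $F$ on $V(G)$ in which every edge of $G$ joins a pair in the ancestor--descendant relation of $F$, of the height of $F$. Since $\mathrm{td}(G)$ is defined as a \emph{minimum}, it suffices to produce a single admissible forest $F$ and read off $\mathrm{td}(G) \le \mathrm{height}(F)$; I do not need $F$ to be optimal, only admissible.

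Next I would run a DFS on each connected component of $G$, obtaining a rooted spanning forest $T$ with one rooted tree per component. I would then invoke the fundamental property of DFS: every edge of $G$ is either a tree edge of $T$ or joins a vertex to one of its ancestors in $T$, so in particular no edge of $G$ crosses between two distinct branches. Hence every edge of $G$ is realised as an ancestor--descendant pair in $T$, which means $T$ is an admissible elimination forest and therefore $\mathrm{td}(G) \le \mathrm{height}(T)$. Finally I would extract the path: a longest root-to-leaf path $P$ in $T$ uses only tree edges, so $P$ is genuinely a path of $G$, and its length equals $\mathrm{height}(T)$. Combining the two facts, if $\mathrm{td}(G) \ge d$ then $\mathrm{height}(T) \ge \mathrm{td}(G) \ge d$, so $P$ is a subgraph of $G$ isomorphic to a path of length at least $d$, as required.

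The main obstacle is not conceptual depth but bookkeeping: I must keep the vertices-versus-edges convention consistent between the height of the forest and the notion of path length, so that the bound lands exactly on $d$ rather than on $d \pm 1$; and in the disconnected case I must treat $T$ as a genuine forest with one tree per component rather than artificially linking the components (doing so would create an edge-free ancestor relation and destroy the DFS property, invalidating the whole argument). Both points are routine, but they are precisely where an off-by-one error or a spurious cross-edge could slip in, so I would state them explicitly before concluding.
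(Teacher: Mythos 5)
Your DFS argument is correct and is, in essence, \emph{the} proof of this fact: the paper itself does not prove Theorem~\ref{thm:longpath} at all (it is imported verbatim from~\cite{NOM12}), and the proof in that reference is exactly the one you give --- a depth-first search forest has no cross edges, hence is an admissible elimination forest, hence $\mathrm{td}(G)\le \mathrm{height}(T)$, while every root-to-leaf path of $T$ consists of tree edges and so is an honest path of $G$. Both halves of your argument (admissibility of the DFS forest, and the extraction of the path) are sound, including your treatment of disconnected graphs as a forest with one tree per component.

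One caveat, though, on the point you dismiss as routine bookkeeping: the off-by-one cannot be made to ``land exactly on $d$.'' Under the standard conventions (treedepth of $K_1$ equals $1$, i.e.\ height counts vertices, and the length of a path counts edges), the statement as printed is in fact false: $K_2$ has treedepth $2$ but its longest path has length $1$. What your DFS argument genuinely delivers is a path with at least $\mathrm{td}(G)\ge d$ \emph{vertices}, equivalently edge-length at least $d-1$, and that is the correct form of the result in~\cite{NOM12}. So the defect lies in the statement's phrasing, not in your proof; you should simply state your conclusion as ``a path on at least $d$ vertices'' rather than promising edge-length $d$. This weaker form is all the paper ever uses (in the proofs of Theorems~\ref{thm:s111r} and~\ref{thm:s11qr} the treedepth thresholds have slack in the constants), so nothing downstream is affected.
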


Our next two theorems consider graphs $S_{w,x,y,z}$.  By Definition~\ref{def-claw}, this graph is four paths sharing an endvertex. 
In a small abuse of terminology, we will use \emph{leaf} to mean only a vertex of degree~1 that is adjacent to the centre.  

\begin{theorem}\label{thm:s111r}
Let $r$ be a positive integer.  Then the subclass of connected $S_{1,1,1,r}$-subgraph-free graphs that are not subcubic has bounded treedepth.
\end{theorem}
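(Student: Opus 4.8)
The plan is to bound the treedepth by bounding the length of a longest path, via the contrapositive of Theorem~\ref{thm:longpath}: if every subgraph path of $G$ has length less than $d$, then $G$ has treedepth less than $d$. So it suffices to produce a constant $L=L(r)$, which I will take to be a small multiple of $r$ (say $L=10r$), such that no graph in the class contains a path of length $L$. Fix a connected $S_{1,1,1,r}$-subgraph-free graph $G$ that is not subcubic, and let $u$ be a vertex with $\deg(u)\ge 4$. Recall from Definition~\ref{def-claw} that $S_{1,1,1,r}$ is the tree consisting of a centre with three pendant neighbours (leaves) together with one arm of length $r$. I will show that a path of length at least $10r$ in $G$ forces a copy of $S_{1,1,1,r}$ as a subgraph, contradicting the hypothesis.

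First I reduce to a long path emanating from $u$. Suppose $P$ is a path of length at least $10r$. Since $G$ is connected, take a shortest path $W$ from $u$ to $V(P)$, meeting $P$ for the first time at a vertex $z$; then $W$ is internally disjoint from $P$, and by shortest-path minimality the only neighbour of $u$ on $W$ is its first vertex. One of the two subpaths of $P$ ending at $z$ has length at least $5r$, and concatenating it with $W$ yields a simple path $R=u,x_1,\ldots,x_m$ having $u$ as an endpoint with $m\ge 5r$.

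The key step is the following claim: if $\deg(u)\ge 4$ and there is a path $R=u,x_1,\ldots,x_m$ of length $m\ge 5r$ with $u$ as an endpoint, then $G$ contains $S_{1,1,1,r}$. The idea is to build the long arm as a length-$r$ subpath of $R$ starting at one neighbour of $u$ and pointing \emph{away} from the remaining neighbours, which then become the three leaves (neighbours of $u$ lying off $R$ are always available as leaves, since any arm inside $R\cup\{u\}$ avoids them). Let $p$ and $q$ be the smallest and largest indices of neighbours of $u$ on $R$. If $q\le m-r+1$, the arm $u,x_q,x_{q+1},\ldots,x_{q+r-1}$ avoids every other neighbour (all have index $<q$); if $p\ge r$, the arm $u,x_p,x_{p-1},\ldots,x_{p-r+1}$ avoids every other neighbour. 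In the remaining case $p\le r-1$ and $q\ge m-r+2$, so the neighbours split into a cluster near each end of $R$; since $m\ge 5r$, a length-$r$ arm started at an appropriate neighbour just inside one cluster and directed towards the far end clears the other three neighbours, which then serve as leaves. In every case we exhibit a centre of degree $4$, three pendant leaves, and a vertex-disjoint arm of length $r$, i.e.\ a copy of $S_{1,1,1,r}$.

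Combining these, if the treedepth of $G$ were at least $10r$, then Theorem~\ref{thm:longpath} would give a path of length at least $10r$, hence a path of length at least $5r$ with endpoint $u$, hence a copy of $S_{1,1,1,r}$ --- a contradiction. Thus every graph in the class has treedepth less than $10r$, a bound depending only on $r$. The main obstacle is the positional case analysis in the claim, and within it the sub-case where $\deg(u)=4$ with the neighbours clustered near both ends of $R$; this is the only place requiring an interior starting vertex for the arm and a careful check that the other three neighbours lie off it. Everything else is bookkeeping, and having more than four neighbours, or neighbours off $R$, only makes finding three leaves easier.
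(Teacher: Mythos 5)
Your overall strategy matches the paper's: bound treedepth via the contrapositive of Theorem~\ref{thm:longpath}, route a long path to the vertex $u$ of degree at least~$4$, and build an $S_{1,1,1,r}$ with $u$ as centre, one tentacle along the path, and three remaining neighbours of $u$ as leaves. But your key claim has a genuine gap. Your case analysis is governed only by the minimum index $p$ and maximum index $q$ of the neighbours of $u$ on $R$, and in the residual case ($p\le r-1$, $q\ge m-r+2$) you infer that ``the neighbours split into a cluster near each end of $R$''. That inference is false: nothing constrains the neighbours other than $x_p$ and $x_q$, so with $\deg(u)=4$ the other two may sit anywhere in the middle of $R$ --- for instance at indices $1,r,3r,m$. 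There the natural reading of your recipe (arm from the start cluster toward the far end, i.e.\ $u,x_1,\dots,x_r$) runs through the neighbour $x_r$ and leaves only two vertices available as leaves. The conclusion is still true for degree exactly~$4$, but proving it requires a pigeonhole argument over \emph{all} gaps between consecutive neighbour indices (at most four gaps summing to $m-1\ge 5r-1$, so some gap has length at least $r$, and an arm launched into that gap meets no other neighbour); that argument is not in your write-up. A minor symptom of the same looseness: your case $p\ge r$ is vacuous, since $x_1$ is itself a neighbour of $u$, so $p=1$ always.

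The closing assertion that higher degree ``only makes finding three leaves easier'' is also not immediate: with, say, $\deg(u)=6$ and neighbours spaced $r-1$ apart along $R$, \emph{every} length-$r$ arm anchored at a neighbour contains a second neighbour, so no arm ``clears'' the others; one must instead count that enough neighbours remain off the arm ($\deg(u)-3$ may lie on it), which is a different argument from the one you describe. The paper sidesteps this entire positional analysis with one trick you are missing: it deletes $v_0$ \emph{together with its four neighbours} before extracting the long path (removing five vertices lowers treedepth by at most five). The long path then automatically avoids all four neighbours, and the shortest connecting path from $v_0$ to it contains at most one of them, so three leaves always survive with no case analysis, yielding the bound $2r+2$ directly. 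I recommend either adopting that trick or replacing your case~3 and the higher-degree hand-wave with the gap/counting arguments sketched above.
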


\begin{proof}
Let $G$ be a connected $S_{1,1,1,r}$-subgraph-free graph that is not subcubic so contains a vertex $v_0$ with neighbours $v_1, v_2, v_3, v_4$.  We will show that $G$ has treedepth at most $2r+2$.
Suppose instead that the treedepth of $G$ is at least $2r+3$.   The graph $G \setminus \{v_0, v_1, v_2, v_3, v_4\}$ must have treedepth at least $2r-2$ (since adding a vertex to a graph cannot increase the treedepth by more than one),
and therefore, by Theorem~\ref{thm:longpath}, it must contain a path $P$ of length at least $2r-2$.
Let $Q$ be a shortest path in $G$ between $P$ and $v_0$ (which must exist as $G$ is connected). Let $z$ be the vertex where $P$ and $Q$ meet.  Let $P'$ be the longest subpath of $P$ of which~$z$ is an endvertex.  As $P'$ is at least half the length of $P$, and $Q$ contains at least one edge, the path $P' \cup Q$ contains at least $r$ edges.
Thus there exists in $G$ a subgraph isomorphic to $S_{1,1,1,r}$; the centre is $v_0$, $P' \cup Q$ is the tentacle of length $r$, and three of $v_1, v_2, v_3, v_4$ are the three leaves (since at most one of these four vertices can belong to $Q$ and none belong to~$P'$).  This contradiction completes the proof.
\end{proof}

The assumption that the graphs are connected is needed: the class of all graphs that are each a disjoint union of a path and a $K_{1,4}$ is not subcubic but has unbounded treedepth.

Consider now the class of all connected graphs that are each the union of a  path and a $K_{1,4}$, one of whose leaves is identified with the endvertex of the path.  This is a class of graphs that are connected, not subcubic and $S_{1,1,q,r}$-subgraph-free and again has unbounded treedepth.  Thus, in the following analogue of Theorem~\ref{thm:s111r}, we need an additional property.
A bridge is \emph{proper} if neither incident vertex has degree~$1$.
A graph is \emph{quasi-bridgeless} if it contains no proper bridge.

\begin{theorem}\label{thm:s11qr}
Let $q$ and $r$ be positive integers.  Then the subclass of  connected $S_{1,1,q,r}$-subgraph-free graphs that are not subcubic and are quasi-bridgeless has bounded treedepth.
\end{theorem}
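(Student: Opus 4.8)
The plan is to argue by contradiction: assume the treedepth of $G$ exceeds a bound $d=d(q,r)$ and then produce a copy of $S_{1,1,q,r}$. By Theorem~\ref{thm:longpath} this assumption yields a path $P$ of length at least $d$, and since $G$ is not subcubic it has a vertex of degree at least $4$. The observation that makes the leaves cheap is that the two length-$1$ tentacles of $S_{1,1,q,r}$ are just two neighbours of the centre that avoid the two long tentacles, and these need \emph{not} be low-degree vertices of $G$. Hence the whole task reduces to finding a vertex $w$ with $\deg_G(w)\ge 4$ together with two internally disjoint paths leaving $w$ of lengths at least $q$ and $r$.

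First I would clean up the graph using the hypothesis that $G$ is quasi-bridgeless. Since no edge joining two vertices of degree at least $2$ is a bridge, every such edge lies on a cycle, so every internal vertex of $P$ lies on a cycle and therefore survives the repeated deletion of degree-$1$ vertices. Thus the $2$-core $G_2$ (the maximal subgraph of minimum degree at least $2$) is connected, bridgeless, and still contains a path of length at least $d-2$. A short argument then shows that every degree-$\ge 4$ vertex of $G$ already lies in $G_2$ unless $G$ is a star (a bounded-treedepth case that is dealt with separately), so $G_2$ still witnesses the high-degree vertex; the pendant vertices discarded in forming $G_2$ remain available in $G$ to serve as leaves of the eventual star.

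Next I would run a block--cut-tree analysis inside $G_2$. As $G_2$ is bridgeless, each of its blocks is $2$-connected, and each cut vertex shared by two blocks has degree at least $4$ (at least two edges into each block). Tracing the long path through the blocks gives a dichotomy. If the path crosses a cut vertex $w$ with at least $q$ of its length on one side and at least $r$ on the other, then the two sides of $P$ are the two long tentacles, and the $\ge 2$ further edges at $w$ supply the two leaves, so we are done. Otherwise the long path is essentially concentrated inside a single $2$-connected block $B$, which by the reduction above must itself contain a vertex $w$ with $\deg_G(w)\ge 4$.

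The remaining, and main, obstacle is this last case: extracting two internally disjoint long paths from $w$ inside a $2$-connected block $B$ that contains a long sub-path $P_B$. A naive two-fan from $w$ into $P_B$ can fail, because both attachment points may land near one end of $P_B$, leaving only one long tentacle. The plan to overcome this is to produce a long cycle through $w$: using $2$-connectivity one obtains two internally disjoint paths from $w$ to the two endpoints of $P_B$, and combining these with $P_B$ gives a closed walk through $w$ of length at least $|P_B|$. The delicate technical point is to turn this walk into a genuine cycle: where these paths meet the interior of $P_B$ one reroutes along $P_B$, keeping the two paths internally disjoint while losing only a bounded amount of length, so as to obtain a cycle $Z$ through $w$ with $|Z|\ge q+r+1$. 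The two arcs of $Z$ leaving $w$, truncated to lengths $q$ and $r$, are then the two tentacles (they meet only at $w$ since $q+r<|Z|$), and the spare edges at $w$ give the leaves, truncating a tentacle slightly if one of the at most two spare neighbours happens to be a chord onto it. Choosing $d$ large enough that every halving and truncation still leaves tentacles of length at least $\max(q,r)$ completes the contradiction and bounds the treedepth.
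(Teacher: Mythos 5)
Your opening reduction is sound (the centre of any copy of $S_{1,1,q,r}$ must indeed be a vertex of degree at least $4$, and the chord-swapping remark can be made rigorous given enough slack), and the $2$-core and block--cut-tree preprocessing is essentially correct. The genuine gap is in your final, main step: the claim that from the fan $Q_x,Q_y$ (internally disjoint paths from $w$ to the endpoints $x,y$ of $P_B$) one can ``reroute along $P_B$'' to obtain a cycle $Z$ through $w$ with $|Z|\ge q+r+1$, losing only a bounded amount of length. This is false. Write $P_B=p_0p_1\cdots p_n$ with $x=p_0$, $y=p_n$, and consider the configuration $Q_y=(w,u,p_n)$ for a private vertex $u$, and $Q_x$ equal to the edge $wp_{n-2}$ followed by $P_B$ traversed backwards from $p_{n-2}$ to $p_0$. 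These are internally disjoint, yet the union $Q_x\cup Q_y\cup P_B$ contains exactly one cycle, namely $w\,p_{n-2}\,p_{n-1}\,p_n\,u\,w$ of length $5$: no rerouting inside this union can produce a long cycle. Worse, this union does not even contain two internally disjoint paths from $w$ of lengths $q$ and $r$ (any path leaving $w$ through $u$ dies after at most three edges once the long tentacle through $p_{n-2}$ is reserved), so your reduced task also fails on the subgraph your proof actually constructs. This is exactly the ``both attachment points land near one end'' failure you identified for the naive fan; fanning to the endpoints and rerouting does not repair it. To proceed one must invoke further edges of the $2$-connected block beyond the fan and the path, and your proposal supplies no mechanism for that --- this is the hard core of the theorem, not a removable technicality.

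For contrast, the paper escapes this trap by anchoring at the \emph{middle} vertex $z$ of the long path rather than at its endpoints: quasi-bridgelessness and the edge version of Menger's theorem yield a cycle $C$ through $z$ together with a vertex $v$ having two neighbours off $C$, and a separate claim analyses the ``intervals'' of $P$ relative to $C$. Because the cycle passes through the middle of $P$, the path protrudes a long way on both sides of any short cycle, which is precisely what excludes the stuck configuration above; the ensuing case analysis produces either a longer cycle or a copy of $S_{1,1,q,r}$ whose centre may turn out to be $z$ or $v$ rather than the originally chosen high-degree vertex --- a flexibility your argument forgoes by fixing $w$ as the centre from the outset. A secondary, easily repaired slip: you justify ``$B$ must itself contain a vertex of degree at least $4$'' by the $2$-core reduction, but that reduction only places such a vertex somewhere in $G_2$; the correct justification is that either $G_2$ is $2$-connected (so $B=G_2$) or $B$ contains a cut vertex of $G_2$, which has degree at least $4$ by your own observation about blocks.
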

\begin{proof}
Let $G$ be a connected quasi-bridgeless $S_{1,1,q,r}$-subgraph-free graph that is not subcubic so contains a vertex $v_0$ with neighbours $v_1, v_2, v_3, v_4$.  We will show that $G$ has treedepth at most $2(q+r+3)^2+6$.
Suppose instead that the treedepth of $G$ is at least $2(q+r+3)^2+7$.   The graph $J=G \setminus \{v_0, v_1, v_2, v_3, v_4\}$ must have treedepth at least $2(q+r+3)^2+2$  
and therefore, by Theorem~\ref{thm:longpath}, it must contain a path $P$ of length at least $2(q+r+3)^2+2$.
Let $z$ be the middle vertex of $P$. We prove the following claim.

\begin{claim} \label{c:cycleS}
If there is a cycle $C$ in $G$ that contains $z$ and also a vertex $v \not= z$ that has two neighbours $a$ and $b$ not on $C$, then $G$ contains a subgraph isomorphic to $S_{1,1,q,r}$.
\end{claim}
\begin{claimproof} 
A \emph{big adorned cycle} is a graph that contains a cycle with at least $q+r+1$ edges and two further vertices each joined by an edge to the same vertex on the cycle; the latter vertex is called the centre.   If we find a big adorned cycle in $G$ we are done as it contains a subgraph isomorphic to $S_{1,1,q,r}$ (the centre is the same and it is obtained by deleting one or more edges of the cycle).
Let $C^+$ be the union of $C$ and the vertices $a$ and $b$ and the edges $va$ and $vb$.  If $|C| \geq q+r+1$, then $C^+$ is a big adorned cycle.

So suppose that $|C| \leq q+r$.  Consider the intersections of $P$ with $V(C^+)$. A maximal subpath of~$P$ whose internal vertices are not in $V(C^+$) is called an \emph{interval} of $P$. Note that~$P$ has at most $|C^+|+1 \leq q+r+3$ intervals. If all intervals of $P$ have length at most $q+r-1$, then $P$ itself has length at most $(q+r+3) (q+r-1) < (q+r+3)^2$, a contradiction. Hence, at least one of the intervals  has length at least $q+r$; we call such an interval \emph{long}.  See Figure~\ref{fig:intervals} for an illustration.

	\begin{figure}[htbp]
		\centering
		\includegraphics[width=0.75\textwidth]{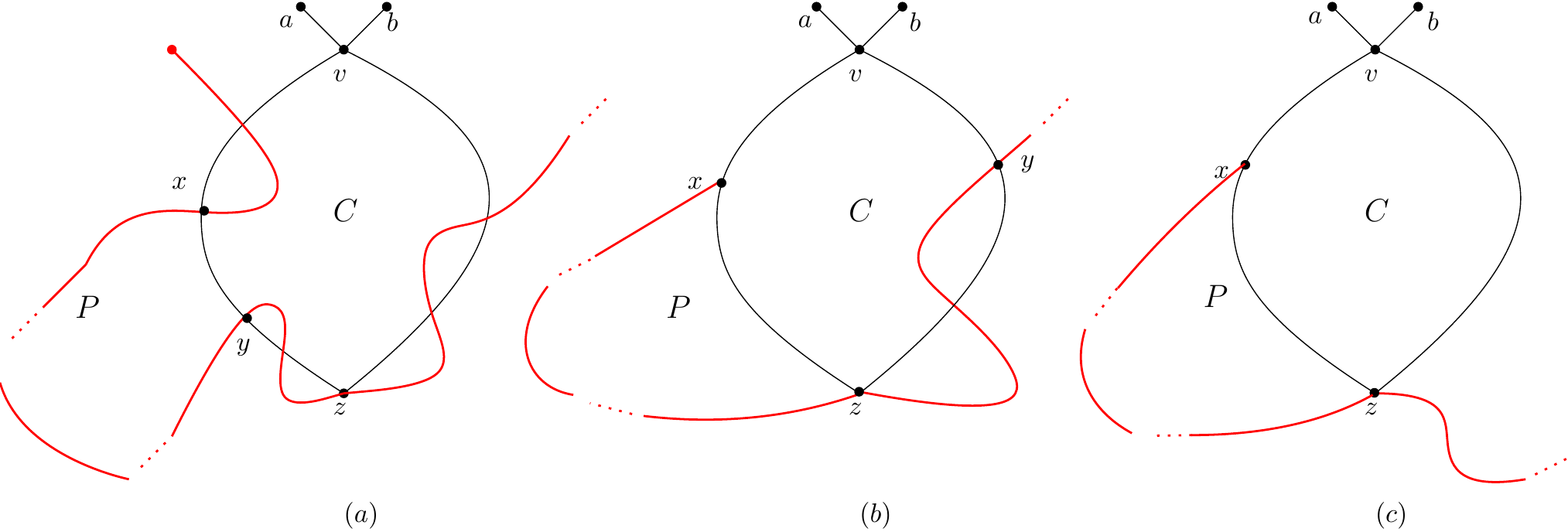}
		\caption{The cycle $C$ and path $P$ from the proof of Claim~\ref{c:cycleS} illustrating the three cases (a) there is a long interval with both endvertices in $P$, (b) there are two vertex-disjoint long intervals, and (c) there are two long intervals that meet in a single vertex.}   \label{fig:intervals}
	\end{figure}

Suppose that there is a long interval $L$ of which both endvertices $x$ and $y$ are in $V(C^+)$. Then there are shortest (possibly trivial) paths $S$ and $T$ on $C^+$ from $v$ to $x$ and $y$ respectively that are vertex disjoint except for $v$.   As $x$ and $y$ are distinct, the union of $L$, $S$ and $T$ is a cycle on at least $q+r+1$ edges.  As $v$ has four neighbours in $C^+$, two of them do not belong to this cycle and considering these two neighbours (and the incident edges that join them to $v$) with the cycle, we have a big adorned cycle centred at $v$.

Hence, there is no long interval with both endvertices in $V(C^+)$ and we can assume any long interval has just one endvertex in $V(C^+)$.
 Suppose that there are two long intervals $L_1$ and $L_2$ whose endvertices in $C^+$ are $x$ and $y$ respectively.  If $x=y$, then $L_1$ and $L_2$ are the only intervals and their union is $P$.  This implies that $P$ only intersects $C^+$ in $x$ and so we must have $x=z$.  Then there exists in $G$ a subgraph isomorphic to an $S_{1,1,q,r}$ with $z$ as its centre, the neighbours of $z$ on $C$ as the leaves and subpaths of $L_1$ and $L_2$ as the tentacles.  If $x \neq y$, then there are shortest paths $S$, $T$ on $C^+$ from $v$ to $x$ and $y$ respectively that are vertex disjoint except for $v$. Then there exists in $G$ a subgraph isomorphic to an $S_{1,1,q,r}$ with $v$ as its centre, the paths $S$ and $T$, possibly extended by subpaths of $L_1$ and $L_2$,  as the tentacles and two neighbours of $v$ in $C^+$ that do not belong to $S$ or $T$ as the leaves. 

Hence, there is only one long interval $L$.  As the other intervals are short, they have total length at most $|C^+| \cdot (q+r) < (q+r+3)^2$. Hence, $L$ has length at least $(q+r+3)^2+2$.  As $L$ contains more than half the vertices of $P$, the middle vertex of $P$ is an internal vertex of $L$ and so does not belong to $C^+$.  This contradicts that $z$ is the middle vertex of $P$ and completes the proof of the claim.
\end{claimproof}

\noindent
We now apply the claim.
Note that $v_0$ and $z$ are distinct as $z$ belongs to $J$ but $v_0$ does not.
 Since $G$ is quasi-bridgeless and neither $v_0$ nor $z$ has degree 1, it follows from Menger's Theorem~\cite{Menger1927} that there exist two edge-disjoint paths $S$, $T$ from $v_0$ to $z$. If $S$ and $T$ are internally vertex-disjoint paths, then their union forms a cycle that contains $z$.  We can assume that each of $S$ and $T$ contain only one neighbour of $v_0$ else we can find shortcuts and redefine them.  Hence, $v_0$ has two neighbours not in the cycle and we can apply Claim~\ref{c:cycleS}. 
If $S$ and $T$ are not internally vertex-disjoint, let $v'$ be a vertex of $(V(S) \cap V(T)) \setminus \{z\}$ that is furthest from $v$ on $T$. Consider the subpath $T'$ of $T$ from $v'$ to $z$ and the subpath $S'$ of $S$ from $v'$ to $z$. Since $T'$ does not intersect $S$ by definition, $S'$ and $T'$ are internally vertex disjoint. Hence, their union forms a cycle that contains $z$.  Moreover, $v'$ has degree at least four, of which two neighbours are not on $S'$ or $T'$. Hence, we can apply Claim~\ref{c:cycleS}.
\end{proof}

\section{Graphs Excluding Subdivided Stars as a Subgraph: Algorithms}  \label{s-algos}
We present several applications of the structural results of the previous section.

We note that {\sc Feedback Vertex Set}, {\sc Independent Feedback Vertex Set} and {\sc Colouring} can be solved componentwise.  In a sense, so can {\sc Connected Vertex Cover} and {\sc Matching Cut} since disconnected graphs are no instances (except possibly for {\sc Connected Vertex Cover} instances with edgeless components but these can be ignored).


\begin{theorem} \label{thm:algo:s111r}
Let $r$ be a positive integer.
A problem $\Pi$ can be solved in polynomial time on $S_{1,1,1,r}$-subgraph-free graphs if the following hold:
\begin{enumerate}[i)]
\item $\Pi$ can be solved in polynomial time on subcubic graphs,
\item $\Pi$ can be solved in polynomial time on graphs of bounded treedepth, and
\item $\Pi$ can be solved componentwise on disconnected graphs.
\end{enumerate}
\end{theorem}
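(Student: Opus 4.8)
The plan is to combine the structural dichotomy of Theorem~\ref{thm:s111r} with the three hypotheses via a short case analysis; essentially all the work has already been done in the structural section, and this theorem is the payoff. First I would observe that since $S_{1,1,1,r}$ is connected, every connected component of an $S_{1,1,1,r}$-subgraph-free graph is itself $S_{1,1,1,r}$-subgraph-free: a copy of $S_{1,1,1,r}$ occurring as a subgraph would necessarily lie entirely within a single component. By hypothesis~(iii), it therefore suffices to describe a polynomial-time algorithm on \emph{connected} $S_{1,1,1,r}$-subgraph-free graphs, running it on each component of a disconnected instance and combining the answers.

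So let $G$ be a connected $S_{1,1,1,r}$-subgraph-free graph. I would split into two cases according to whether $G$ is subcubic, a condition testable in linear time by inspecting vertex degrees. If $G$ is subcubic, then hypothesis~(i) solves $\Pi$ on $G$ in polynomial time. If $G$ is not subcubic, then Theorem~\ref{thm:s111r} applies and guarantees that $G$ has treedepth at most $2r+2$. The key point is that this bound depends only on the fixed integer $r$ and not on $G$, so the non-subcubic connected $S_{1,1,1,r}$-subgraph-free graphs form a class of bounded treedepth, and hypothesis~(ii) then solves $\Pi$ on $G$ in polynomial time.

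Since each branch runs in polynomial time, and the degree check together with the componentwise recombination from hypothesis~(iii) contribute only polynomial overhead, the overall procedure is polynomial, which completes the argument.

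I do not expect any genuine obstacle here: the entire structural difficulty has been absorbed into Theorem~\ref{thm:s111r}. The only points that require (routine) care are verifying that the forbidden-subgraph property is inherited by components, and noting that the treedepth bound $2r+2$ is uniform in $G$, so that hypothesis~(ii)—phrased for classes of bounded treedepth—legitimately covers the non-subcubic connected case rather than merely an individual graph.
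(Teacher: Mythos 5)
Your proposal is correct and follows exactly the paper's own proof: split into connected components via hypothesis~(iii), then on each component apply hypothesis~(i) if it is subcubic and otherwise apply Theorem~\ref{thm:s111r} together with hypothesis~(ii), merging the results at the end. The extra remarks you make (components inherit $S_{1,1,1,r}$-subgraph-freeness since $S_{1,1,1,r}$ is connected, and the treedepth bound $2r+2$ is uniform in $G$) are left implicit in the paper but are the right points to check.
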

\begin{proof}
Let $C$ be a connected component of a $S_{1,1,1,r}$-subgraph-free graph $G$.  If $C$ is subcubic, then the problem can be solved in polynomial time. Otherwise, by Theorem~\ref{thm:s111r}, $C$ has bounded treedepth and again the problem can be solved in polynomial time. Finally, the solutions for its connected components can be merged in polynomial time.
\end{proof}


\begin{theorem} \label{thm:algo:s11qr}
Let $q$ and $r$ be positive integers.
A problem $\Pi$ can be solved in polynomial time on $S_{1,1,q,r}$-subgraph-free graphs if the following hold:
\begin{enumerate}[i)]
\item $\Pi$ can be solved in polynomial time on subcubic graphs,
\item $\Pi$ can be solved in polynomial time on graphs of bounded treedepth, and
\item $\Pi$ can be solved on graphs with proper bridges using a polynomial-time reduction to a family of instances on graphs that are either of bounded treedepth or subcubic.
\end{enumerate}
\end{theorem}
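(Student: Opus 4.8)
The plan is to follow the template of the proof of Theorem~\ref{thm:algo:s111r}, replacing the role of Theorem~\ref{thm:s111r} by Theorem~\ref{thm:s11qr}. The one genuinely new ingredient is that Theorem~\ref{thm:s11qr} bounds the treedepth only for connected $S_{1,1,q,r}$-subgraph-free graphs that are additionally \emph{quasi-bridgeless}. So, unlike in the $S_{1,1,1,r}$ case, a connected component need not be subcubic or of bounded treedepth on its own: its proper bridges must first be removed before the structural result can be applied. This removal is exactly the purpose of hypothesis~(iii).

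Concretely, given an $S_{1,1,q,r}$-subgraph-free graph $G$, I would first apply the polynomial-time reduction supplied by~(iii) to replace $G$ by a family of instances whose underlying graphs contain no proper bridge and (after separating connected components) are connected. The key observation is that deleting bridge edges and passing to connected pieces are both subgraph operations, so every graph appearing in the family is again $S_{1,1,q,r}$-subgraph-free, and by construction it is quasi-bridgeless. Theorem~\ref{thm:s11qr} therefore applies to each of them: such a graph is either subcubic, in which case $\Pi$ is solved by~(i), or it has bounded treedepth, in which case $\Pi$ is solved by~(ii). Since~(iii) produces only polynomially many instances, each solvable in polynomial time, and prescribes how to recover the answer for $G$ from the answers on the family, the whole procedure runs in polynomial time. (When $G$ is already quasi-bridgeless the reduction step is vacuous and Theorem~\ref{thm:s11qr} is applied directly to the connected components of $G$, one of which is either subcubic or, by the theorem, of bounded treedepth.)

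The step I expect to require the most care is the matching between hypothesis~(iii) and the structural result: one must be sure that the reduction terminates only once \emph{all} proper bridges have been eliminated, rather than merely one of them, so that its output graphs are genuinely quasi-bridgeless, and that the reduction does not introduce a copy of $S_{1,1,q,r}$ along the way (which holds here precisely because the decomposition proceeds by deletion and component separation). These two points are exactly what license the application of Theorem~\ref{thm:s11qr} to the reduced instances; once they are secured, the remaining per-instance solving via~(i) and~(ii) and the final merging guaranteed by~(iii) are routine.
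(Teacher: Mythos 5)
Your proof is correct and follows the same skeleton as the paper's: apply the reduction from condition~iii), solve each instance in the resulting family via condition~i) or condition~ii), and recover the answer for $G$. The genuine difference is where Theorem~\ref{thm:s11qr} enters. The paper reads condition~iii) literally: the reduction is \emph{assumed} to output instances whose graphs are already either subcubic or of bounded treedepth, so its proof is three sentences and never invokes Theorem~\ref{thm:s11qr}; that theorem is used only afterwards, in the remark following Theorem~\ref{thm:algo:s11qr} and in the proof of Theorem~\ref{cor:algo:s11qr}, to \emph{verify} condition~iii) for the concrete problems (there the reduction deletes proper bridges, and quasi-bridgelessness plus $S_{1,1,q,r}$-subgraph-freeness give subcubic or bounded treedepth). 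You instead build that verification into the proof of the theorem itself: you read condition~iii) as a bridge-deleting, component-separating reduction, observe that these are subgraph operations so $S_{1,1,q,r}$-subgraph-freeness is preserved, and then apply Theorem~\ref{thm:s11qr}. Both routes are sound; yours proves a statement tailored to what applications actually need, while the paper's hypothesis makes the theorem itself nearly tautological and defers your structural work to the applications. Be aware that under the paper's literal reading your intermediate claims (that the output graphs are quasi-bridgeless subgraphs of $G$) are not part of the hypothesis --- though they are also not needed then, since the outputs are already assumed subcubic or of bounded treedepth. Finally, the worry you flag resolves as you expect: deleting \emph{all} proper bridges simultaneously leaves quasi-bridgeless components, because an edge lying on a cycle still lies on that cycle afterwards (cycles contain no bridges, so no non-bridge becomes a bridge), and degrees only decrease under deletion, so no non-proper bridge becomes proper.
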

\begin{proof}
Let $H$ be one of the family of instances obtained from an instance $G$ of $\Pi$. 
As $H$ is either of bounded treedepth or subcubic, the problem can be solved in polynomial time.  As we have a reduction, once solved on all the family of instances, we can solve $\Pi$ on~$G$.
\end{proof}

The simplest way to apply Theorem~\ref{thm:algo:s11qr} is to show that if it is possible to solve~$\Pi$ on each of the family of components obtained by deleting the proper bridges of an instance, then these solutions combine to  provide a solution for the initial instance (since the components are quasi-bridgeless and so certainly either of bounded treedepth or subcubic by Theorem~\ref{thm:s11qr}.)


We now use Theorem~\ref{thm:algo:s11qr} to prove Theorem~\ref{cor:algo:s11qr}.  We do not apply Theorem~\ref{thm:algo:s111r} in this paper, as the results it would give us would just be special cases of those we have obtained using Theorem~\ref{thm:algo:s11qr}.  Nevertheless, there are potential applications of Theorem~\ref{thm:algo:s111r} as there might be C13 problems that can be solved componentwise but cannot be solved by finding the reduction required by Theorem~\ref{thm:algo:s11qr}.  We will see, in the proof below, that to solve {\sc Independent Feedback Vertex Set} via a reduction requires an intricate argument and the careful analysis of possible solutions on subcubic graphs that was provided by Theorem~\ref{t-subcubic2}.

\medskip
\noindent
{\bf Theorem~\ref{cor:algo:s11qr} (restated).}
{\it Let $q$ and $r$ be positive integers.
The following problems can be solved in polynomial time on $S_{1,1,q,r}$-subgraph-free graphs: {\sc Feedback Vertex Set}, {\sc Independent Feedback Vertex Set}, {\sc Connected Vertex Cover}, {\sc Colouring} and {\sc Matching Cut}.}

\begin{proof}[Proof of Theorem~\ref{cor:algo:s11qr}]
To show that the result follows immediately from Theorem~\ref{thm:algo:s11qr}, we can show that the problems can be solved by deleting bridges and considering the resulting graph componentwise; this will be trivial for some problems, but for others we will need to find a different reduction.  

For {\sc Feedback Vertex Set}, as bridges do not belong to cycles, the problem is unchanged when they are deleted.

For {\sc Independent Feedback Vertex Set} such a straightforward approach is not possible as if we simply delete bridges and solve the problem on the components, the merged solution might not be independent (since we might choose both endvertices of a deleted bridge).  We must argue a little more carefully.
Let $G$ be a $S_{1,1,q,r}$-subgraph-free graph and consider the treelike structure of $G$ when thinking of its blocks --- the connected components when the bridges are deleted.  In fact, consider a subgraph of $G$ that is a block plus all its incident bridges.  Some of these subgraphs might be subcubic; let us call these \emph{C-type}. For those that are not, we can assume, by Theorem~\ref{thm:s11qr}, that there is a constant $c$ such that their treewidth is at most $c$; let us call these subgraphs \emph{T-type} (note that this is a weaker claim that the Theorem~\ref{thm:s11qr} provides as we could assume that the treedepth was bounded).  If such a subgraph is both subcubic and has treewidth at most $c$, we will think of it as T-type.  We can assume $c \geq 3$ so a very nice cactus is T-type.  If subgraphs of the same type overlap (because they are joined by a bridge), we observe that their union is also of that type (since the union is also either, respectively, subcubic or of treewidth at most $c$).  So, merging overlapping subgraphs of the same type as much as possible we can consider $G$ as being made up of C- and T-type subgraphs and bridges that each join a C-type subgraph to a T-type subgraph.  As {\sc Independent Feedback Vertex Set} is a C13 problem we can solve it on these subgraphs.  Before we solve it on a C-type subgraph, we can delete pendant bridges (that link to a T-type subgraph in $G$) so the incident vertex now has degree at most~2.  As a very nice cactus is being considered as a T-type subgraph, we know, by Theorem~\ref{t-subcubic2}, that the solutions we find for C-type subgraphs do not use the vertices incident with the bridges.  Thus the solutions can be merged for a solution for $G$ that is also independent.

For {\sc Connected Vertex Cover}, let $G$ be a $S_{1,1,q,r}$-subgraph-free graph. Clearly, we may assume $G$ is connected, or it has no connected vertex cover.
As for {\sc Independent Feedback Vertex Set} consider each subgraph $J$ that is a block of $G$ and also include the bridges of $G$ incident with the block. Observe that $J$ is quasi-bridgeless and $S_{1,1,q,r}$-subgraph-free.  Noticing that a connected vertex cover $W$ of $G$ must contain both vertices incident with any proper bridge, we see that the restriction of  $W$ to the vertices of $J$ is a connected vertex cover of $J$ that includes vertices incident with bridges of $G$.
 And the construction of $J$ means its connected vertex covers will include these vertices adjacent to bridges in $G$.  Thus we see that have a reduction and can solve the problem on $G$.

For {\sc Colouring} if, for a graph $G$, we colour the components of the graph obtained by deleting bridges, then we can merge these into a colouring of $G$.  If the two endvertices of a bridge have been coloured alike, then we just permute the colours on one of the components.  This might create new clashes, but we move to the adjacent components and permute there.  By the definition of bridge, we will never have to permute the colours on a component more than once so the process terminates.

For {\sc Matching Cut}, if a graph contains a bridge, then we have immediately that it is a yes instance.
\end{proof}


\section{Graphs Excluding Subdivided Stars as a Subgraph: Hardness}
\label{s-hardness}



We prove Theorem~\ref{thm:hardness}.

\medskip
\noindent
{\bf Theorem~\ref{thm:hardness} (restated).}
{\it {\sc Feedback Vertex Set} and {\sc Independent Feedback Vertex Set} are \NP-complete on the class of  $S_{2,2,2,2}$-subgraph free graphs that have maximum degree~$4$.}

\begin{proof}
	\begin{figure}[htbp]
		\centering
		\includegraphics[width=0.75\textwidth]{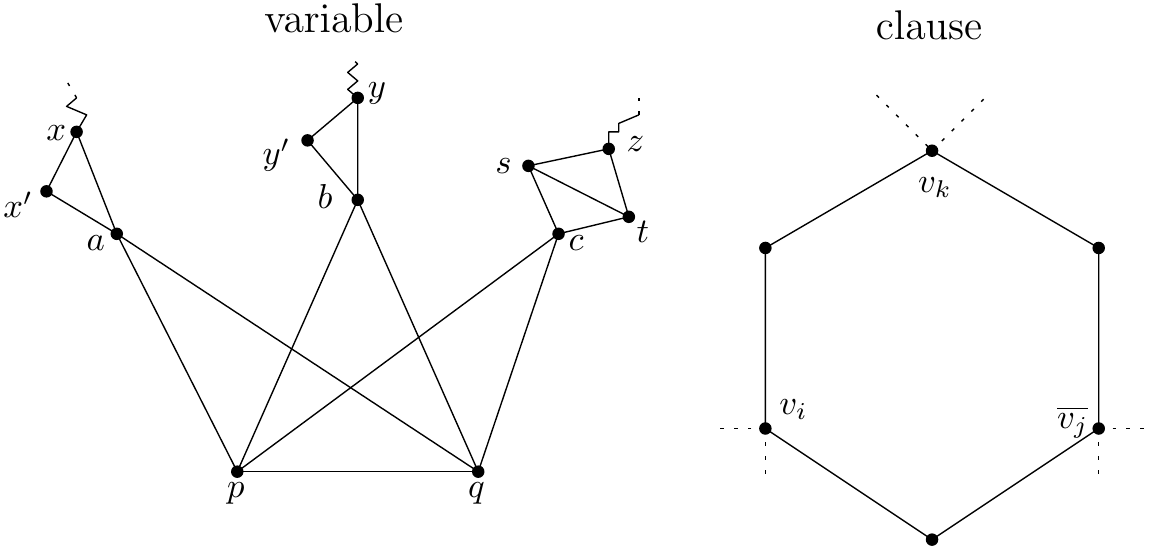}
		\caption{The variable and clause gadgets (for clauses of size 3) from the proof of Theorem~\ref{thm:hardness}. The vertices $x$, $y$, and $z$ of a variable gadget will be identified with the (labeled) vertices of clause gadgets.} \label{fig:Construction}
	\end{figure}

	Both problems belong to \NP. We shall show a reduction from the following \NP-complete problem {\sc 2P1N-3SAT}~\cite{DJPSY94}. 
	
	\problemdef{{\sc 2P1N-3SAT}}{A CNF formula $\Phi$ where each clause contains at most three literals and each variable occurs twice positively and once negatively.}{Does $\Phi$ have a satisfying assignment?}
	
	\medskip
	\noindent Given an instance of {\sc 2P1N-3SAT} with on variables $\{v_1, \ldots, v_n\}$, we construct a graph $G$ as follows. For each variable $v_i$, we construct the gadget shown in Figure~\ref{fig:Construction}. The triangles $xx'a$ and $yy'b$ represent the positive occurrences of the variable, while the diamond $zstc$ represents the negative occurrence. For each clause $C_j$, we construct a hexagon if the clause has size~$3$ and a square if the clause has size~$2$ (we may assume that no clause has size~$1$). Alternate vertices of this clause gadget represent literals and are identified with a vertex $x$, $y$ or $z$ of the corresponding variable gadget.  Clearly this can be done in such a way each vertex $x$ and $y$ of each variable gadget is identified with exactly one vertex from a clause gadget that represents a positive literal and each vertex $z$ of each variable gadget is identified with exactly one vertex from a clause gadget that represents a negative literal.
		Note that $G$ has maximum degree~$4$.

	\begin{claim}\label{clm:S2222-free}
		$G$ does not contain $S_{2,2,2,2}$ as a subgraph.
	\end{claim}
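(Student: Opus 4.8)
The plan is to exploit that $S_{2,2,2,2}$ has a \emph{unique} vertex of degree~$4$, namely its centre, from which emanate four \emph{tentacles} --- four paths of length~$2$ that are pairwise vertex-disjoint apart from the centre. Since $G$ has maximum degree~$4$, any copy of $S_{2,2,2,2}$ in $G$ must be centred at a vertex of degree exactly~$4$, and moreover \emph{all four} edges at such a centre must serve as the initial edges of the four tentacles. So the first step is to identify the degree-$4$ vertices of $G$. Reading off Figure~\ref{fig:Construction}, one checks that every clause-gadget connector vertex, together with the auxiliary variable-gadget vertices $x'$, $a$, $y'$, $b$, $c$ and the degree-$3$ vertices of each diamond, has degree at most~$3$, whereas each identified literal vertex $x$, $y$ or $z$ has degree exactly~$4$: two of its incident edges lie inside its variable-gadget cycle (a triangle for $x$ and $y$, the diamond for $z$) and the other two lie inside the clause cycle into which it was identified. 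Hence it suffices to rule out each literal vertex as a centre.

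The key step is purely local to the variable-gadget cycle. Because all four edges at the centre are used, in particular the two edges pointing into the gadget cycle must each begin a tentacle, and I will show this is already impossible. For a triangle centre $x$ with triangle-neighbours $x'$ and $a$ (both of degree~$2$, hence confined to the triangle), the tentacle beginning $x\to x'$ can only be completed to $x\to x'\to a$, and the tentacle beginning $x\to a$ only to $x\to a\to x'$; these two paths use the same vertex set $\{x',a\}$ and so are not disjoint. For the diamond centre $z$, note $z$ is a degree-$2$ vertex of the diamond whose two neighbours are precisely the two degree-$3$ vertices of the diamond (whose incident edges all stay inside the diamond). A short enumeration of the possible extensions then forces both tentacles to terminate at the remaining degree-$2$ vertex of the diamond, again a collision. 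In either case the two gadget edges cannot both be extended to disjoint tentacles, so no literal vertex can be the centre of an $S_{2,2,2,2}$, which establishes Claim~\ref{clm:S2222-free}. For completeness one may also observe that the two clause-side edges contribute at most two further tentacles (the two clause-neighbours are connector vertices leading to distinct further literal vertices in a hexagon, or to the same one in a square), but this is not even needed given the gadget obstruction.

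The main obstacle is justifying the confinement assumptions that drive the local argument: one must verify from Figure~\ref{fig:Construction} that the gadget-internal neighbours of $x$, $y$ and $z$ cannot be extended outside their small triangle or diamond, since it is exactly this confinement that makes those cycles ``too small'' to host two disjoint length-$2$ paths from the centre. The triangle case is immediate once $x'$ and $a$ are seen to have degree~$2$. The diamond case is the only one needing the small case check, and it is pleasantly robust: even if the opposite diamond tip carried further incident edges, both surviving tentacles are forced to end at that tip, so the collision persists; the only genuine requirement is that the diamond's two degree-$3$ vertices have no edges leaving the diamond. Once these local degree facts are read off the construction, the global conclusion that $G$ is $S_{2,2,2,2}$-subgraph-free follows immediately.
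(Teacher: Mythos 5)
Your overall strategy --- enumerate the degree-$4$ vertices and kill each one as a potential centre by a local tentacle-collision argument --- is legitimate, and it differs from the paper, which instead argues that no vertex can be a centre if its $2$-neighbourhood contains a cut of size $3$, and exhibits such a cut for every candidate. However, your execution has a genuine gap: you have misread the variable gadget of Figure~\ref{fig:Construction}, and consequently your list of candidate centres is wrong. The gadget is not merely the two triangles $xx'a$, $yy'b$ and the diamond $zstc$; it also contains two further vertices $p$ and $q$, with $a$, $b$ and $c$ joined to them. This is forced by the rest of the proof of Theorem~\ref{thm:hardness}: the gadget must contain \emph{four} pairwise disjoint cycles (not three), the prescribed feedback vertex sets are $\{x,y,p,t\}$ and $\{z,a,b,c\}$, and Claim~\ref{clm:fvs} relies on paths \emph{inside the gadget} between the vertices $x$, $y$, $z$ --- none of which is possible if the gadget were just two triangles and a diamond hanging separately off the clause cycles. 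As a result, $a$, $b$, $c$, $p$ and $q$ all have degree $4$ (these are exactly the vertices on which the paper's proof spends most of its effort), and your proposal never rules any of them out. In particular, your assertion that the tentacle $x \to a$ can only be completed to $x \to a \to x'$ is false: $a$ is also adjacent to $p$ and $q$.

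Parts of your argument do survive. The collision at $x$ (hence at $y$) holds for the reason contained in your first observation alone: $x'$ has degree $2$, so the tentacle through $x'$ is forced to end at $a$, and this collides with the tentacle that starts $x \to a$ wherever the latter continues; the same argument also rules out $a$ and $b$ as centres (their tentacle through $x'$, respectively $y'$, is forced onto $x$, respectively $y$). Your diamond argument is likewise sound and, as you note, robust to extra edges at the far tip, so it disposes of $z$ and, symmetrically, of $c$. What is genuinely missing is $p$ and $q$: for these you need a separate argument --- for instance, the neighbours of $q$ other than $p$ are exactly $a$, $b$, $c$, which are already spent as the first vertices of the other three tentacles at $p$; or the paper's observation that $\{a,b,c\}$ is a cut whose removal leaves only $\{p,q\}$ on their side, so four disjoint tentacles cannot escape. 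As written, the proposal does not prove Claim~\ref{clm:S2222-free}: five of the eight degree-$4$ vertices in each variable gadget are never examined.
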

	
		\begin{claimproof}
Let us consider where we might find the centre vertex of a $S_{2,2,2,2}$ in $G$.  
Clearly a vertex $v$ cannot be the centre vertex if its 2-neighbourhood in $G$ contains a cut of size~3 (that is, if there are 3 vertices each of distance at most 2 from $v$ that form a cut in $G$).
 The centre vertex cannot be the vertices $p$ or $q$ of a variable gadget, because the set $\{a,b,c\}$ of the same gadget forms a cut of size~$3$ in the $2$-neighbourhood of $p$ and $q$. The centre vertex cannot be the vertices $a$, $b$, or $c$ of a variable gadget either, because $\{x,p,q\}$, $\{y,p,q\}$ and $\{z,p,q\}$ respectively form cuts of size~$3$ in their $2$-neighbourhoods. The vertices $x$, $y$, and $z$ cannot be the centre vertex as in their $2$-neighbourhood is a cut of size 3 that contains their two neighbours in a clause gadget and, respectively, $a$, $b$ and $c$.   The remaining vertices of $G$ have degree less than 4.  The claim is proved.
	\end{claimproof}

	Any feedback vertex set of a variable gadget has size at least~$4$, because it contains four disjoint cycles.
	So any feedback vertex set of $G$ must contain at least $4n$ vertices.  It only remains to show that
		$G$ has an (independent) feedback vertex set of size at most $4n$ if and only if $\Phi$ is satisfiable.

	Assume that $\Phi$ has a satisfying assignment. We construct a feedback vertex set $F$ of $G$. If a variable is true, then the vertices $x$, $y$, $p$, and $t$ of the  variable gadget belong to $F$. If a variable is false, then instead $z$, $a$, $b$, and $c$ belong to $F$.  Thus $F$ is an independent set  (vertices of distinct variable gadgets are not adjacent) and its size is exactly~$4n$.
	
	\begin{claim} \label{clm:fvs}
	 $F$ is a feedback vertex set.  
\end{claim}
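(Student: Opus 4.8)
The plan is to establish Claim~\ref{clm:fvs} by proving the equivalent statement that $G-F$ is acyclic. It will be convenient to call the two edges joining each literal vertex $x$, $y$, or $z$ to its clause gadget its \emph{external} edges, and to call all remaining edges of a variable gadget \emph{internal}. Recall that for a true variable we put $\{x,y,p,t\}$ into $F$ and for a false variable we put $\{z,a,b,c\}$ into $F$. The strategy has a local part (understanding each gadget after deleting $F$) and a global part (ruling out cycles that weave between gadgets); I expect the global part to be the only real obstacle, and the key to it is the observation below that each surviving literal vertex hangs a \emph{pendant tree} into $G-F$.

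First I would carry out a routine local analysis of a single variable gadget in $G-F$, in the two cases. If the variable is true, then $x$, $y$, $p$, $t$ are removed and the internal edges that remain induce a tree on $\{z,s,c,q,a,b,x',y'\}$ (one checks the edges $zs,sc,cq,qa,ax',qb,by'$ form a tree), in which the only literal vertex, hence the only vertex carrying external edges, is $z$. If the variable is false, then $z$, $a$, $b$, $c$ are removed and the remaining internal edges induce the forest consisting of the isolated edges $xx'$ and $yy'$ (and the leftover edge $st$ of the diamond together with the now-isolated $p,q$); the two surviving literal vertices $x$ and $y$ lie in \emph{different} components, each adjacent internally only to the degree-one vertex $x'$, respectively $y'$. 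In both cases the component of $G-F$ reached by entering the gadget through a surviving literal vertex via its internal edge is a tree whose only attachment to the rest of $G-F$ is that literal vertex itself. Separately, since the assignment satisfies $\Phi$, every clause contains a true literal, and its vertex lies in $F$ (a true positive literal gives $x$ or $y$ with the variable true, a true negative literal gives $z$ with the variable false); deleting at least one vertex from the $6$-cycle (hexagon) or $4$-cycle (square) of each clause gadget destroys the unique cycle there, so in $G-F$ each clause gadget is a disjoint union of paths.

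Finally I would assemble these facts into a contradiction. Suppose $O$ is a cycle in $G-F$. Moving from one clause gadget to another can only happen through a variable gadget, since each literal vertex and each clause-internal vertex belongs to a unique clause gadget. But at any literal vertex $\ell$ lying on $O$, using the internal edge leads into a pendant tree whose sole connection to the rest of $G-F$ is $\ell$ (by the local analysis, in particular the subtle false-variable case where $x$ and $y$ survive but are internally separated), so $O$ cannot return and close up after entering a gadget interior. Hence at every literal vertex on $O$ both incident edges of $O$ are external; as a literal vertex's two external edges go to the \emph{same} clause gadget, $O$ never leaves a single clause gadget. This forces $O$ to be a cycle inside one clause gadget, contradicting the fact that each clause gadget is a union of paths in $G-F$. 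Therefore $G-F$ has no cycle, so $F$ is a feedback vertex set, proving Claim~\ref{clm:fvs}; see Figure~\ref{fig:Construction} for the gadgets underlying this argument.
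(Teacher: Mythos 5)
Your proof is correct and takes essentially the same route as the paper's: both dispose of cycles confined to a single variable or clause gadget via the choice of $F$ and the satisfying assignment, and both rule out cross-gadget cycles by showing that a cycle cannot traverse a variable gadget between two of its literal vertices. The only difference is framing --- the paper argues directly that any such traversal must contain $x$ together with $a$, or $y$ together with $b$, and that $F$ always contains one vertex from each pair, whereas you prove the contrapositive by exhibiting the pendant-tree structure of the variable gadgets in $G-F$.
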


\begin{claimproof}
Notice that if a literal of a clause is satisfied, then, in the clause gadget, the corresponding vertex is in $F$.  Thus, as clause is satisfied, each cycle contained in a single variable or clause gadget contains a vertex of $F$.   Consider a cycle of $G$ that is not contained within a single gadget.  It must include a non-trivial path of some variable gadget where the endvertices are two of $\{x,y,z\}$.  If it includes $x$ it must also include $a$ and if it includes $y$ it must also include $b$.  But $F$ contains one of $\{x,a\}$ and one of $\{y,b\}$ so such a cycle also intersects $F$. Thus $F$ intersects all the cycles of $G$. 
\end{claimproof}

	Conversely, suppose that $G$ has a feedback vertex set $F$ of size at most $4n$. Again, each variable gadget contains at least four vertices of $F$ and so  contains exactly four vertices of $F$.  Notice that $F$ cannot contain either $\{x,z\}$ or $\{y,z\}$ as, in each case, there are three disjoint cycles of the gadget that would need to be covered by just two vertices.  
	
Let us describe a satisfying assignment of $\Phi$.  If, for a variable gadget, either $x$ or $y$ belongs to $F$, we let the variable be true. If $z$ belongs to $F$, we let it be false.  By the preceding argument, there is no possibility that we must set a variable to be both true and false.   If none of $\{x,y,z\}$ belong to $F$, we set the value of the variable arbitrarily.  This is a satisfying assignment as every clause gadget (which is a cycle) must have at least one vertex in $F$ and the corresponding variable is satisfied. 
\end{proof}

\section{Proofs of the Classifications}  \label{s-proofs}

We prove Theorems~\ref{t-dichofvs}--\ref{t-dichomc}. 
Noting that the theorems contain some analogous results, and wishing to avoid repetition, we make a few general comments that apply to all proofs.

We state again that the five problems under consideration are C13 problems.  Thus when $H \in {\cal S}$, each theorem follows from Theorem~\ref{t-dicho3}.  When $H = S_{1,1,q,r}$, we apply Theorem~\ref{cor:algo:s11qr}.  Thus, except for Theorem~\ref{t-dichocol} on {\sc Colouring}, the following proofs need only cover the \NP-complete cases.

\begin{proof}[Proof  of Theorem~\ref{t-dichofvs}]
We note again that {\sc Feedback Vertex Set} reduces to {\sc Independent Feedback Vertex Set} after subdividing each edge so here we consider only the former.

By Poljak’s construction~\cite{Poljak1974}, for every integer $g \geq 3$, {\sc Feedback Vertex Set} is \NP-complete for graphs of girth at least $g$ (the girth of a graph is the length of its shortest cycle).  Thus {\sc Feedback Vertex Set} is \NP-complete for $H$-subgraph-free graphs whenever $H$ contains a cycle.

Suppose that $H$ has $m$ vertices and more than one vertex of degree at least $3$.  From any graph $G$, if we we subdivide each edge $m$ times, we obtain a graph $J$ that is $H$-subgraph free since the distance between any pair of vertices of degree more than $2$ is at least $m+1$.  In finding in a minimum size feedback vertex set of $J$, we may as well restrict ourselves to selecting vertices of $G$.  This implies that {\sc Feedback Vertex Set} is \NP-complete for $H$-subgraph-free graphs.
 
The problem is \NP-complete on planar graphs of maximum degree~$4$~\cite{Sp83} (so for $K_{1,5}$-subgraph-free graphs). 
 
Theorem~\ref{thm:hardness} completes the proof.
\end{proof}


\begin{proof}[Proof  of Theorem~\ref{t-dichocvc}]
For every integer $g \geq 3$, {\sc Connected Vertex Cover} is \NP-complete for graphs of girth at least $g$~\cite{Munaro17}, so also for $H$-subgraph-free graphs whenever $H$ contains a cycle.  It is \NP-complete on graphs of maximum degree~4~\cite{GJ79}, so for $K_{1,5}$-subgraph-free graphs. 
\end{proof}

\begin{proof}[Proof  of Theorem~\ref{t-dichocol}]
For every integer $g \geq 3$, {\sc Colouring} is \NP-complete for graphs of girth at least $g$~\cite{MP96}, so also for $H$-subgraph-free graphs whenever $H$ contains a cycle.
In~\cite{GJS76}, it was shown that {\sc Colouring}  is \NP-complete on (planar) graphs of maximum degree~$4$, and so too for $K_{1,5}$-subgraph-free graphs.
The other cases are all proved in~\cite{GPR15}
\end{proof}

\begin{proof}[Proof  of Theorem~\ref{t-dichomc}]
For every integer $g \geq 3$, {\sc Matching Cut} is \NP-complete for graphs of girth at least $g$~\cite{FLPR23}, so also for $H$-subgraph-free graphs whenever $H$ contains a cycle.  It is \NP-complete on graphs of maximum degree~$4$~\cite{Ch84}, so for $K_{1,5}$-subgraph-free graphs. 
\end{proof}

\section{Conclusions}\label{s-con}

We made significant progress towards classifying the complexity of five well-known C13-problems on $H$-subgraph-free graphs, extending previously known results. In particular, we identified a gap in the literature, and provided a polynomial-time algorithm for {\sc Independent Feedback Vertex Set} for subcubic graphs.

If $H$ is connected, then we narrowed the gap for these problems to the open case where $H=S_{1,p,q,r}$, so $H$ is a subdivided star with one short leg and three arbitrarily long legs.
To obtain a result for connected $S_{1,p,q,r}$-subgraph-free graphs similar to our previous results, we would need the graphs to be $3$-edge-connected. Indeed, the statement is false without this assumption. Consider the class of all graphs that are each the union of  a path and a $K_{1,4}$ two of whose leaves are identified with distinct end-vertices of the path and whose other two leaves are made adjacent.  This is a class of graphs that are  bridgeless, not subcubic and $S_{1,p,q,r}$-subgraph-free and again has unbounded treedepth.  It is not yet clear whether a suitably modified theorem statement would indeed hold. In addition, it is unclear whether this would yield a result that could be applied in the same way as Theorems~\ref{thm:s111r} and~\ref{thm:s11qr} were above. We leave the case $H=S_{1,p,q,r}$ as future research.

Finally, we also leave determining the complexity of {\sc Connected Vertex Cover} and {\sc Matching Cut} on $S_{2,2,2,2}$-subgraph-free graphs as an open problem.



\bibliography{mybib3}

\begin{thebibliography}{10}

\bibitem{AK90}
Vladimir~E. Alekseev and Dmitry~V. Korobitsyn.
\newblock Complexity of some problems on hereditary graph classes.
\newblock {\em Diskretnaya Matematika}, 2:90--96, 1990.

\bibitem{ALS91}
Stefan Arnborg, Jens Lagergren, and Detlef Seese.
\newblock Easy problems for tree-decomposable graphs.
\newblock {\em Journal of Algorithms}, 12:308--340, 1991.

\bibitem{BJMOPPSV23}
Hans~L. Bodlaender, Matthew Johnson, Barnaby Martin, Jelle~J. Oostveen, Sukanya
  Pandey, Dani{\"{e}}l Paulusma, Siani Smith, and Erik~Jan van Leeuwen.
\newblock Complexity framework for forbidden subgraphs {IV}: {T}he {S}teiner
  {F}orest problem.
\newblock {\em Manuscript}, 2023.

\bibitem{Bo09}
Paul~S. Bonsma.
\newblock The complexity of the matching-cut problem for planar graphs and
  other graph classes.
\newblock {\em Journal of Graph Theory}, 62:109--126, 2009.

\bibitem{Brooks41}
R.~L. Brooks.
\newblock On colouring the nodes of a network.
\newblock {\em Mathematical Proceedings of the Cambridge Philosophical
  Society}, 37(2):194–197, 1941.
\newblock \href {https://doi.org/10.1017/S030500410002168X}
  {\path{doi:10.1017/S030500410002168X}}.

\bibitem{Ch84}
Vasek Chv{\'{a}}tal.
\newblock Recognizing decomposable graphs.
\newblock {\em Journal of Graph Theory}, 8:51--53, 1984.

\bibitem{DJPSY94}
Elias Dahlhaus, David~S. Johnson, Christos~H. Papadimitriou, Paul~D. Seymour,
  and Mihalis Yannakakis.
\newblock The complexity of multiterminal cuts.
\newblock {\em SIAM Journal on Computing}, 23:864--894, 1994.

\bibitem{FLPR23}
Carl Feghali, Felicia Lucke, Daniel Paulusma, and Bernard Ries.
\newblock Matching cuts in graphs of high girth and h-free graphs.
\newblock {\em CoRR}, 2212.12317, 2023.

\bibitem{GJ79}
Michael~R. Garey and David~S. Johnson.
\newblock {\em Computers and Intractability; A Guide to the Theory of
  {N}{P}-Completeness}.
\newblock W. H. Freeman {\&} Co., USA, 1979.

\bibitem{GJS76}
Michael~R. Garey, David~S. Johnson, and Larry~J. Stockmeyer.
\newblock Some simplified {N}{P}-complete graph problems.
\newblock {\em Theoretical Computer Science}, 1:237--267, 1976.

\bibitem{GP14}
Petr~A. Golovach and Dani{\"{e}}l Paulusma.
\newblock List coloring in the absence of two subgraphs.
\newblock {\em Discrete Applied Mathematics}, 166:123--130, 2014.

\bibitem{GPR15}
Petr~A. Golovach, Dani{\"{e}}l Paulusma, and Bernard Ries.
\newblock Coloring graphs characterized by a forbidden subgraph.
\newblock {\em Discrete Applied Mathematics}, 180:101--110, 2015.

\bibitem{JMOPPSV}
Matthew Johnson, Barnaby Martin, Jelle~J. Oostveen, Sukanya Pandey, Dani\"el
  Paulusma, Siani Smith, and Erik~Jan van Leeuwen.
\newblock Complexity framework for forbidden subgraphs.
\newblock {\em CoRR}, 2211.12887, 2022.

\bibitem{Ka12}
Marcin Kami\'nski.
\newblock Max-{C}ut and containment relations in graphs.
\newblock {\em Theoretical Computer Science}, 438:89--95, 2012.

\bibitem{MP96}
Fr\'{e}d\'{e}ric Maffray and Myriam Preissmann.
\newblock On the np-completeness of the $k$-colorability problem for
  triangle-free graphs.
\newblock {\em Discrete Math.}, 162:313--317, 1996.

\bibitem{MPPSV22}
Barnaby Martin, Sukanya Pandey, Daniel Paulusma, Siani Smith, and Erik~Jan van
  Leeuwen.
\newblock Complexity framework for forbidden subgraphs: When hardness is not
  preserved under edge subdivision.
\newblock {\em CoRR}, 2022.
\newblock \href {https://arxiv.org/abs/2211.14214} {\path{arXiv:2211.14214}}.

\bibitem{Menger1927}
Karl Menger.
\newblock Zur allgemeinen kurventheorie.
\newblock {\em Fund. Math.}, 10:96--115, 1927.

\bibitem{Munaro17}
Andrea Munaro.
\newblock Boundary classes for graph problems involving non-local properties.
\newblock {\em Theoretical Computer Science}, 692:46--71, 2017.

\bibitem{NOM12}
Jaroslav Nesetril and Patrice~Ossona de~Mendez.
\newblock {\em Sparsity - Graphs, Structures, and Algorithms}, volume~28 of
  {\em Algorithms and combinatorics}.
\newblock Springer, 2012.

\bibitem{Poljak1974}
Svatopluk Poljak.
\newblock A note on stable sets and colorings of graphs.
\newblock {\em Commentationes Mathematicae Universitatis Carolinae},
  015(2):307--309, 1974.
\newblock URL: \url{http://eudml.org/doc/16622}.

\bibitem{Sp83}
Ewald Speckenmeyer.
\newblock {\em Untersuchungen zum {F}eedback {V}ertex {S}et Problem in
  ungerichteten Graphen}.
\newblock PhD thesis, Paderborn, 1983.

\bibitem{UKG88}
Shuichi Ueno, Yoji Kajitani, and Shin'ya Gotoh.
\newblock On the nonseparating independent set problem and feedback set problem
  for graphs with no vertex degree exceeding three.
\newblock {\em Discrete Mathematics}, 72(1):355--360, 1988.
\newblock \href {https://doi.org/10.1016/0012-365X(88)90226-9}
  {\path{doi:10.1016/0012-365X(88)90226-9}}.

\end{thebibliography}




\end{document}